\documentclass{article}

\usepackage{calc}
\usepackage{amsmath}
\usepackage{amsxtra}
\usepackage{amssymb}
\usepackage{amsfonts}
\usepackage{amsthm}
\usepackage{amstext}
\usepackage{amsbsy}
\usepackage{amscd}
\usepackage[dvips]{color}

\usepackage[sans]{dsfont}

\usepackage[dvips]{graphicx}

\theoremstyle{plain}
\newtheorem{theorem}{Theorem}
\newtheorem{lemma}{Lemma}

\newtheorem{proposition}[lemma]{Proposition}
\theoremstyle{definition}

\newtheorem{definition}[lemma]{Definition}
\theoremstyle{remark}
\newtheorem{remark}[lemma]{Remark}

\newcommand{\R}{\mathds{R}}

\newcommand{\Z}{\mathds{Z}}
\newcommand{\defm}[1]{\emph{#1}}
\newcommand{\subeq}[2]{\mathord{\underbrace{\mathop{#1}}_{#2}}}

\newcommand{\trace}{\operatorname{tr}}

\newcommand{\find}{\operatorname{ind}}
\newcommand{\ran}{\operatorname{ran}}

\newcommand{\vx}{\vec\xi}

\newcommand{\myeqref}[1]{(\ref{#1})}
\newcommand{\myref}[1]{\ref{#1}}

\newcommand{\mylabel}[1]{\label{#1}}
\newcommand{\myeqlabel}[1]{\label{#1}}

\newcommand{\myquote}[1]{%
    \par\hspace{1cm}\parbox{\linewidth-2cm}{#1}\hspace{1cm}\par
}

\def\pmpaper{elling-liu-pmeyer}
\def\pmnewlabel#1#2{\expandafter\def\csname pmref-#1\endcsname{#2}}
\def\pmref#1{\csname pmref-#1\endcsname}
\def\pmeqref#1{(\csname pmref-#1\endcsname)}
\def\pmc#1{\cite[#1]{\pmpaper}}

\pmnewlabel{fig:weakstrong}{1}
\pmnewlabel{fig:wedgeflow}{2}
\pmnewlabel{fig:spolar-lines}{3}
\pmnewlabel{section:numerics}{1.2}
\pmnewlabel{fig:fullsol}{4}
\pmnewlabel{fig:frameorig}{4}
\pmnewlabel{th:elling-liu}{1}
\pmnewlabel{fig:numerics}{5}
\pmnewlabel{eq:techcond}{1.3.1}
\pmnewlabel{eq:prob1}{1.3.2}
\pmnewlabel{eq:prob2}{1.3.3}
\pmnewlabel{eq:prob3}{1.3.4}
\pmnewlabel{rem:weaksol}{1.3.1}
\pmnewlabel{eq:weak-ini}{1.3.5}
\pmnewlabel{fig:techcond}{6}
\pmnewlabel{section:ppf}{2}
\pmnewlabel{section:pf}{2.1}
\pmnewlabel{eq:rhodiv}{2.1.1}
\pmnewlabel{eq:mom}{2.1.2}
\pmnewlabel{eq:p-polytropic}{2.1.3}
\pmnewlabel{eq:v}{2.1.4}
\pmnewlabel{eq:rhoA}{2.1.5}
\pmnewlabel{eq:potflow-divform}{2.1.6}
\pmnewlabel{eq:rho}{2.1.7}
\pmnewlabel{eq:Dpiinv}{2.1.8}
\pmnewlabel{eq:potential-flow}{2.1.9}
\pmnewlabel{eq:cs-uspf}{2.1.10}
\pmnewlabel{section:sspf}{2.2}
\pmnewlabel{eq:psi-phi}{2.2.1}
\pmnewlabel{eq:rhoeq}{2.2.2}
\pmnewlabel{eq:chi-divform}{2.2.3}
\pmnewlabel{eq:chi}{2.2.4}
\pmnewlabel{eq:psi}{2.2.5}
\pmnewlabel{eq:css}{2.2.6}
\pmnewlabel{rem:symmetries}{2.2.1}
\pmnewlabel{eq:L}{2.2.7}
\pmnewlabel{eq:chijump}{2.3.1}
\pmnewlabel{eq:momjump}{2.3.2}
\pmnewlabel{eq:psijump}{2.3.3}
\pmnewlabel{eq:chitan}{2.3.4}
\pmnewlabel{eq:psitan}{2.3.5}
\pmnewlabel{eq:steady-continuity}{2.3.6}
\pmnewlabel{eq:chitan-z}{2.3.7}
\pmnewlabel{section:shocks}{2.4}
\pmnewlabel{eq:steady-rho}{2.4.1}
\pmnewlabel{eq:normal-rho}{2.4.2}
\pmnewlabel{lemma:srel-M}{2.4.1}
\pmnewlabel{eq:srel-M}{2.4.3}
\pmnewlabel{eq:g}{2.4.4}
\pmnewlabel{eq:dgdM}{2.4.5}
\pmnewlabel{eq:ddgddM}{2.4.6}
\pmnewlabel{eq:gM0}{2.4.7}
\pmnewlabel{eq:gMinf}{2.4.8}
\pmnewlabel{eq:cM}{2.4.9}
\pmnewlabel{eq:rhoM}{2.4.10}
\pmnewlabel{eq:cMpre}{2.4.11}
\pmnewlabel{lemma:srel}{2.4.2}
\pmnewlabel{eq:Mnd-asym}{2.4.12}
\pmnewlabel{eq:dMRdMLgen}{2.4.13}
\pmnewlabel{eq:dMRdML}{2.4.14}
\pmnewlabel{prop:shockv}{2.4.4}
\pmnewlabel{eq:DDspecial}{2.4.16}
\pmnewlabel{eq:DvndDvnu}{2.4.17}
\pmnewlabel{eq:zndznucomp}{2.4.18}
\pmnewlabel{lemma:movingnormal}{2.4.6}
\pmnewlabel{eq:DvndDsigma}{2.4.19}
\pmnewlabel{eq:DrhoDsigma}{2.4.20}
\pmnewlabel{section:shockpolar}{2.5}
\pmnewlabel{prop:shockpolar}{2.5.1}
\pmnewlabel{eq:DvdxDnva}{2.5.1}
\pmnewlabel{eq:DvdyDnva}{2.5.2}
\pmnewlabel{eq:blark}{2.5.3}
\pmnewlabel{fig:shockL}{7}
\pmnewlabel{prop:steady-shock-circle}{2.6.1}
\pmnewlabel{prop:vdzero}{2.6.2}
\pmnewlabel{fig:horvzero}{8}
\pmnewlabel{eq:bluaaa}{2.6.1}
\pmnewlabel{eq:vydeta}{2.6.2}
\pmnewlabel{eq:etaMbeta}{2.6.3}
\pmnewlabel{section:apriori}{3}
\pmnewlabel{lemma:alemma2}{3.1.1}
\pmnewlabel{eq:alemma2}{3.1.1}
\pmnewlabel{lemma:alemma}{3.1.2}
\pmnewlabel{eq:alemma}{3.1.2}
\pmnewlabel{eq:aineq}{3.1.3}
\pmnewlabel{lemma:interior-higher}{3.1.3}
\pmnewlabel{eq:lihcond}{3.1.4}
\pmnewlabel{eq:genDpsi}{3.1.5}
\pmnewlabel{eq:genDf}{3.1.6}
\pmnewlabel{prop:c-principle}{3.2.1}
\pmnewlabel{eq:rhomin2}{3.2.1}
\pmnewlabel{prop:interior-velocity}{3.3.1}
\pmnewlabel{prop:v-wall}{3.4.1}
\pmnewlabel{prop:vshock}{3.5.1}
\pmnewlabel{eq:propvx1}{3.5.1}
\pmnewlabel{eq:propvx-s11}{3.5.2}
\pmnewlabel{eq:blurb}{3.5.3}
\pmnewlabel{eq:rhos-psi12}{3.5.4}
\pmnewlabel{eq:wsys}{3.5.5}
\pmnewlabel{eq:wdet}{3.5.6}
\pmnewlabel{prop:L-minmax}{3.6.1}
\pmnewlabel{eq:rhoLb}{3.6.1}
\pmnewlabel{eq:rho-sstr}{3.6.2}
\pmnewlabel{eq:L2}{3.6.4}
\pmnewlabel{prop:density-shock}{3.7.1}
\pmnewlabel{fig:rhos11}{9}
\pmnewlabel{fig:rhotangent}{10}
\pmnewlabel{fig:shockstrength}{11}
\pmnewlabel{section:ellreg}{4}
\pmnewlabel{section:approach}{4.2}
\pmnewlabel{fig:frameR}{12}
\pmnewlabel{fig:frameL}{12}
\pmnewlabel{fig:regularized}{13}
\pmnewlabel{eq:chitt-exp}{4.2.1}
\pmnewlabel{eq:chitt-exp2}{4.2.2}
\pmnewlabel{eq:chitt-exp3}{4.2.3}
\pmnewlabel{section:parmset}{4.3}
\pmnewlabel{eq:deltant}{4.3.1}
\pmnewlabel{eq:dntcond}{4.3.2}
\pmnewlabel{eq:constlist}{4.3.3}
\pmnewlabel{def:Lambda}{4.3.1}
\pmnewlabel{eq:Ceta}{4.3.4}
\pmnewlabel{lemma:etax}{4.3.2}
\pmnewlabel{fig:etaL}{14}
\pmnewlabel{lemma:Gamma-connected}{4.3.3}
\pmnewlabel{eq:ueLbd}{4.3.5}
\pmnewlabel{def:weighted-hoelder}{4.4.1}
\pmnewlabel{fig:onion}{15}
\pmnewlabel{def:b}{4.4.2}
\pmnewlabel{def:fusp}{4.4.3}
\pmnewlabel{eq:Tt}{4.4.1}
\pmnewlabel{eq:regularity}{4.4.2}
\pmnewlabel{eq:sdef}{4.4.3}
\pmnewlabel{eq:shockwall}{4.4.4}
\pmnewlabel{eq:s-welldef}{4.4.5}
\pmnewlabel{eq:cornerregion}{4.4.6}
\pmnewlabel{eq:cornercone}{4.4.7}
\pmnewlabel{eq:rhoprep}{4.4.8}
\pmnewlabel{eq:rhomin}{4.4.9}
\pmnewlabel{eq:ellip}{4.4.10}
\pmnewlabel{eq:ellipC}{4.4.11}
\pmnewlabel{eq:oldnew}{4.4.12}
\pmnewlabel{eq:tilderho}{4.4.13}
\pmnewlabel{eq:faken}{4.4.14}
\pmnewlabel{eq:tildeL}{4.4.15}
\pmnewlabel{eq:itn-inner}{4.4.16}
\pmnewlabel{eq:itn-parabolic}{4.4.17}
\pmnewlabel{eq:itn-shock}{4.4.18}
\pmnewlabel{eq:itn-wall}{4.4.19}
\pmnewlabel{eq:pn}{4.4.20}
\pmnewlabel{eq:lip}{4.4.21}
\pmnewlabel{eq:partan}{4.4.22}
\pmnewlabel{eq:parnor}{4.4.23}
\pmnewlabel{eq:horvel}{4.4.24}
\pmnewlabel{eq:vertvel}{4.4.25}
\pmnewlabel{eq:leftvel}{4.4.26}
\pmnewlabel{eq:shocknormal}{4.4.27}
\pmnewlabel{eq:ndobb}{4.4.28}
\pmnewlabel{eq:Gb}{4.4.29}
\pmnewlabel{rem:fp}{4.4.4}
\pmnewlabel{rem:reflection}{4.4.5}
\pmnewlabel{prop:Lxn-iso}{4.4.6}
\pmnewlabel{eq:eqlin}{4.4.30}
\pmnewlabel{eq:shocklin}{4.4.31}
\pmnewlabel{eq:parlin}{4.4.32}
\pmnewlabel{eq:walllin}{4.4.33}
\pmnewlabel{prop:pn-uqcont}{4.4.7}
\pmnewlabel{prop:fusp-topology}{4.4.8}
\pmnewlabel{def:it}{4.4.9}
\pmnewlabel{rem:mutrans}{4.5.1}
\pmnewlabel{prop:regularity}{4.5.2}
\pmnewlabel{prop:fp-regularity}{4.5.2}
\pmnewlabel{eq:sLip}{4.5.1}
\pmnewlabel{eq:sregu}{4.5.2}
\pmnewlabel{eq:reguint}{4.5.3}
\pmnewlabel{eq:regu2}{4.5.4}
\pmnewlabel{eq:lipode}{4.5.5}
\pmnewlabel{prop:it-continuous-compact}{4.5.3}
\pmnewlabel{section:L-control}{4.6}
\pmnewlabel{prop:Lbounds}{4.6.1}
\pmnewlabel{eq:Leps}{4.6.1}
\pmnewlabel{fig:pararc}{16}
\pmnewlabel{section:parcs}{4.7}
\pmnewlabel{section:c-pararc}{4.7}
\pmnewlabel{eq:Lsimple}{4.7.1}
\pmnewlabel{eq:Ltchi}{4.7.2}
\pmnewlabel{eq:Lnchi}{4.7.3}
\pmnewlabel{eq:psitautau}{4.7.5}
\pmnewlabel{eq:refp}{4.7.6}
\pmnewlabel{eq:cc-arc}{4.7.7}
\pmnewlabel{eq:h0}{4.7.8}
\pmnewlabel{eq:refk}{4.7.9}
\pmnewlabel{eq:pphi}{4.7.10}
\pmnewlabel{eq:kphi}{4.7.11}
\pmnewlabel{eq:qphi}{4.7.12}
\pmnewlabel{eq:thetaphi}{4.7.13}
\pmnewlabel{prop:pararc}{4.8.1}
\pmnewlabel{eq:rhoP}{4.8.1}
\pmnewlabel{eq:vP}{4.8.2}
\pmnewlabel{eq:thetasector}{4.8.3}
\pmnewlabel{eq:pcontrol}{4.8.4}
\pmnewlabel{eq:ccontrol}{4.8.5}
\pmnewlabel{eq:qcontrol}{4.8.6}
\pmnewlabel{eq:pphi-isen}{4.8.7}
\pmnewlabel{eq:plower}{4.8.8}
\pmnewlabel{eq:pupper}{4.8.9}
\pmnewlabel{fig:thetasector}{17}
\pmnewlabel{section:cornersmoving}{4.9}
\pmnewlabel{eq:zdzuLone}{4.9.1}
\pmnewlabel{eq:zdeta}{4.9.2}
\pmnewlabel{eq:cp1}{4.9.3}
\pmnewlabel{eq:cp2}{4.9.4}
\pmnewlabel{eq:cp3}{4.9.5}
\pmnewlabel{eq:zydpre}{4.9.6}
\pmnewlabel{eq:vyd-eta}{4.9.7}
\pmnewlabel{eq:vyd-eta-positive}{4.9.8}
\pmnewlabel{eq:zxdpre}{4.9.9}
\pmnewlabel{eq:pc-zphi}{4.9.10}
\pmnewlabel{eq:petaineq}{4.9.11}
\pmnewlabel{eq:pc-csq}{4.9.12}
\pmnewlabel{eq:keta}{4.9.13}
\pmnewlabel{eq:pplus}{4.9.14}
\pmnewlabel{eq:kplus}{4.9.15}
\pmnewlabel{eq:qetaB}{4.9.16}
\pmnewlabel{eq:qqq}{4.9.17}
\pmnewlabel{eq:qqeta}{4.9.18}
\pmnewlabel{eq:qeta}{4.9.19}
\pmnewlabel{eq:theta-phibar-plus}{4.9.20}
\pmnewlabel{eq:theta-phibar-minus}{4.9.21}
\pmnewlabel{section:lowerbounds}{4.10}
\pmnewlabel{prop:etaa-lowerbound}{4.10.1}
\pmnewlabel{fig:cmax}{18}
\pmnewlabel{prop:cbar}{4.10.2}
\pmnewlabel{eq:cbar}{4.10.1}
\pmnewlabel{prop:psi-axi}{4.10.3}
\pmnewlabel{eq:a}{4.10.3}
\pmnewlabel{eq:aexp}{4.10.4}
\pmnewlabel{prop:vyd-crit}{4.10.5}
\pmnewlabel{prop:etaa-upperbound}{4.10.6}
\pmnewlabel{section:densitycontrol}{4.11}
\pmnewlabel{prop:rho}{4.11.1}
\pmnewlabel{fig:shocktanarg}{19}
\pmnewlabel{section:v-control}{4.12}
\pmnewlabel{prop:ny}{4.12.1}
\pmnewlabel{prop:vx}{4.12.1}
\pmnewlabel{prop:vy}{4.12.1}
\pmnewlabel{eq:chitshock}{4.12.1}
\pmnewlabel{eq:s1x0}{4.12.2}
\pmnewlabel{prop:oblique-corner}{4.13.1}
\pmnewlabel{eq:gpS}{4.13.2}
\pmnewlabel{prop:fp-boundary}{4.13.2}
\pmnewlabel{section:ls}{4.14}
\pmnewlabel{prop:unperturbed-unique}{4.14.1}
\pmnewlabel{fig:unperturbed}{20}
\pmnewlabel{prop:unperturbed-index}{4.14.3}
\pmnewlabel{eq:paralin}{4.14.1}
\pmnewlabel{eq:fdeta}{4.14.2}
\pmnewlabel{eq:fdnor}{4.14.3}
\pmnewlabel{eq:shockl}{4.14.4}
\pmnewlabel{prop:probell}{4.15.1}
\pmnewlabel{section:entireflow}{4.16}
\pmnewlabel{fig:epslimit}{21}
\pmnewlabel{eq:interior-eps}{4.16.1}
\pmnewlabel{eq:para-chi-eps}{4.16.2}
\pmnewlabel{eq:para-rho-eps}{4.16.3}
\pmnewlabel{eq:para-nablachi-eps}{4.16.4}
\pmnewlabel{eq:shock1-eps}{4.16.5}
\pmnewlabel{eq:shock2-eps}{4.16.6}
\pmnewlabel{eq:cornerdist-eps}{4.16.8}
\pmnewlabel{eq:lip-eps}{4.16.9}
\pmnewlabel{eq:cka-eps}{4.16.10}
\pmnewlabel{eq:eps-weak}{4.16.11}
\pmnewlabel{eq:zero-weak}{4.16.12}
\pmnewlabel{eq:zero-cont}{4.16.13}
\pmnewlabel{eq:zero-cont-rhov}{4.16.14}
\pmnewlabel{rem:notvoid}{4.16.1}
\pmnewlabel{rem:structure}{4.16.2}
\pmnewlabel{fig:corner}{22}
\pmnewlabel{section:corner}{5.1}
\pmnewlabel{prop:corner}{5.1.1}
\pmnewlabel{eq:Gammaregu}{5.1.1}
\pmnewlabel{eq:phithetabd}{5.1.2}
\pmnewlabel{eq:uLip}{5.1.3}
\pmnewlabel{eq:u}{5.1.4}
\pmnewlabel{eq:Gammacond}{5.1.5}
\pmnewlabel{eq:coeffnorm}{5.1.6}
\pmnewlabel{eq:Ce}{5.1.7}
\pmnewlabel{eq:ndob}{5.1.8}
\pmnewlabel{eq:G}{5.1.9}
\pmnewlabel{eq:Colambda}{5.1.10}
\pmnewlabel{eq:ghregu}{5.1.11}
\pmnewlabel{eq:Ctabeta}{5.1.12}
\pmnewlabel{eq:w-interior}{5.1.13}
\pmnewlabel{eq:wboundary}{5.1.14}
\pmnewlabel{eq:ddudg}{5.1.15}
\pmnewlabel{eq:Av}{5.1.16}
\pmnewlabel{eq:gGammaf}{5.1.17}
\pmnewlabel{prop:ztrans}{5.2.1}
\pmnewlabel{eq:Dcond}{5.2.1}
\pmnewlabel{rem:shmor}{5.2.2}

\pagestyle{plain}

\begin{document}

\title{Counterexamples to the sonic criterion}
\author{Volker Elling}
\date{}	

\maketitle

\begin{abstract}
    We consider self-similar (pseudo-steady) shock reflection at an oblique wall. 
    There are three parameters: wall corner angle, Mach number, angle of incident shock.
    Ever since Ernst Mach discovered the irregular reflection named after him, 
    it has been an open problem to predict precisely for what parameters 
    the reflection is regular. Three conflicting proposals, the detachment, sonic and von Neumann 
    criteria, have been studied extensively without a clear result.

    We demonstrate that the sonic criterion is not correct.
    We consider polytropic potential flow and prove that there is an 
    open nonempty set of parameters that admit a global regular reflection
    with a reflected shock that is \emph{transonic}. 

    We also provide a clear physical reason: the flow type (sub- or supersonic)
    is not decisive; instead the reflected shock type (weak or strong)
    determines whether structural perturbations decay towards the reflection point.
\end{abstract}

\parindent=0cm%
\parskip=0cm%

\parindent=0cm%
\parskip=\baselineskip%

\parindent=0cm%
\parskip=\baselineskip%

\section{Introduction}

\subsection{The transition problem in shock reflection}

\mylabel{section:refl}

\begin{figure}
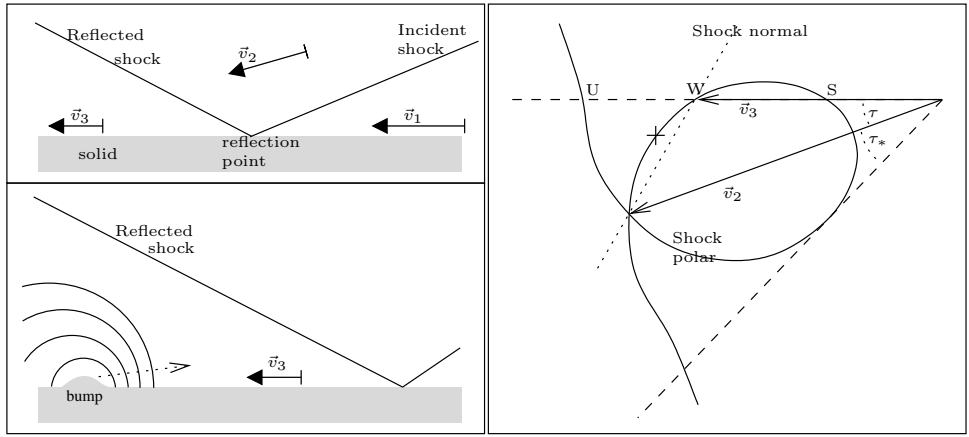

\input{locrr.pstex_t}%
\input{polar.pstex_t}%
\caption{Left top: local RR. 
Left bottom: subsonic case.
Right: fixed $\vec v_2$; each steady shock produces
one $\vec v_3$ on the curve (shock polar, symmetric across $\vec v_2$; shock normal $\parallel\vec v_2-\vec v_3$). 
For $|\tau|<\tau_*$, three shocks satisfy $\tau=\measuredangle(\vec v_2,\vec v_3)$: strong-type (S),
weak-type (W) and expansion (U; unphysical). W are transonic right of $+$, supersonic left.}
\mylabel{fig:locrr-left}%
\mylabel{fig:spolar-right}%
\end{figure}

\begin{figure}
\input{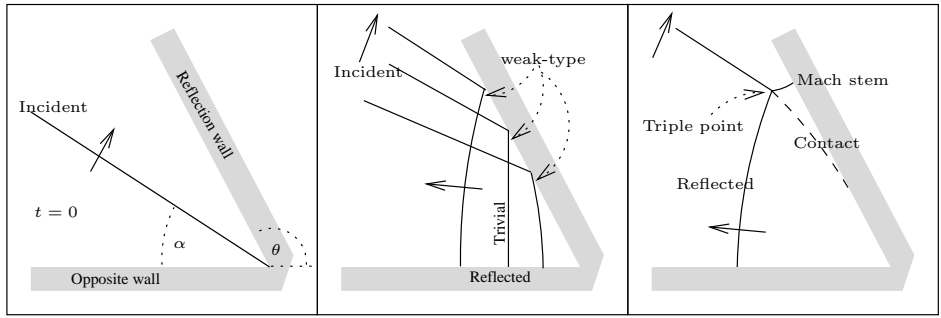}%
\caption{Left: initial data (from dotted area of Figure \myref{fig:experiment} second right).
Center: RR; we construct perturbations of the trivial case. Right: SMR}%
\label{fig:bigger90}%
\end{figure}

\begin{figure}
\input{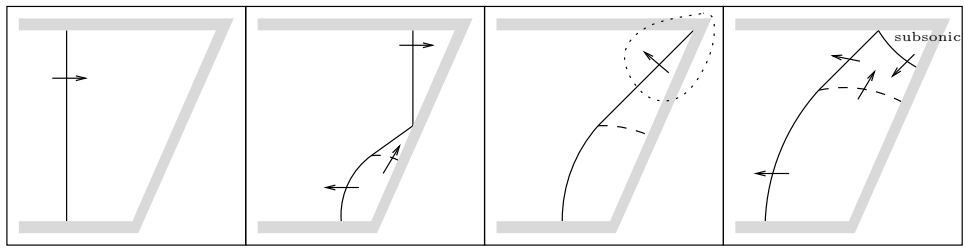}%
\caption{A shockwave breaks into a supersonic RR at the lower corner;
the reflected shock breaks in the upper corner to produce our kind of subsonic RR for
some time.}%
\mylabel{fig:experiment}%
\end{figure}

Reflection of an incident shock from a solid wedge is a classical problem of gas dynamics. 
It has been studied extensively by 
Ernst Mach \cite{mach-wosyka,krehl-geest} and John von Neumann \cite{neumann-1943}, among others.

Most commonly, reflection is studied in \emph{steady} inviscid 
polytropic\footnote{equation of state $p=(\gamma-1)\rho e$,
$e$ internal energy per mass, $\gamma\in(1,\infty)$}
compressible flow, for example when shocks in a nozzle are reflected from the walls.
The reflections can be classified roughly into \defm{regular} and \defm{irregular reflections};
see \cite{ben-dor-book} or \cite[Figure 1]{ben-dor-shockwaves2006} 
for a more detailed discussion. In either type, an \defm{incident shock} impinges on a solid surface.
In regular reflection (RR), the incident shock reaches a \defm{reflection point} on the surface, 
continuing as a \defm{reflected shock} (see Figure \myref{fig:locrr-left} top left).

In \defm{irregular reflections} (IRR), incident and reflected shock are connected by a more or less complex 
interaction pattern which in turn connects
to the solid surface by a third shock, called \defm{Mach stem}.
The most important irregular reflections are double, complex and single 
(see Figure \myref{fig:bigger90} right) Mach reflection (MR); additional types have been 
discussed \cite{guderley,vasiliev-kraiko,hunter-brio,hunter-tesdall,skews-ashworth}.

Some incident shocks allow more than one type of reflection. 
Assuming uniqueness for the problem at hand, only one of them can be extended to a 
global solution: a solution in the \emph{entire} domain, satisfying all boundary and far-field conditions.
A long-standing open question is to find the exact criterion that determines whether the solution is RR.

Among the criteria for appearance of RR that have been proposed (see \cite[Section 1.5]{ben-dor-book}), 
two are most important.
The \defm{detachment criterion} states that global RR appears generically whenever local RR is possible.

A physical argument motivates the second criterion: for a straight wall, 
all local RR and MR are trivially global solutions. But some of them could be unstable 
under perturbations, for example a bump in the $3$-sector wall (Figure \ref{fig:locrr-left} left bottom).
If so, then information is transmitted from the bump to the reflection point\footnote{this is known as 
\emph{information condition} or \defm{information argument}}.
For weak waves\footnote{but sufficiently strong shock waves can travel upstream against a supersonic flow}
that is possible if and only if the $3$-sector is \emph{subsonic}\footnote{the other sectors are always supersonic}. Hence the \defm{sonic criterion}: 
global RR appears generically if there is a \emph{supersonic} local RR, but not otherwise. 
(Each criterion can also be formulated in other, slightly different ways.)

\subsection{Weak- and strong-type}

The velocity $\vec v_2$ in the $2$-sector in Figure \myref{fig:locrr-left} forms an angle $\tau$ with the wall; 
the reflected shock must turn this
velocity by $\tau$ so that $\vec v_3$ is parallel to the wall, satisfying a slip boundary condition.

Keep the $2$-sector data fixed while rotating the reflected shock in the reflection point.
This yields a one-parameter family of velocities $\vec v_3$, forming a curve called \defm{shock polar} 
(see Figure \myref{fig:spolar-right} right). For \emph{admissible} shocks, $|\tau|$ cannot exceed $\tau_*$, 
the \defm{critical angle}, 
which is a function of the Mach number $M_2$ and $\gamma$. 

Throughout this paper we focus on polytropic equations of state
so that the admissible part of the shock polar is strictly convex.

If the angle $\tau$ between wall and $\vec v_2$ is bigger than
$\tau_*$, then local RR is theoretically impossible. If $\tau=\tau_*$, there is exactly one reflected shock,
called \defm{critical-type}. For $\tau<\tau_*$ however there
are \emph{two}, called \defm{weak-type}\footnote{The weak-type shock is relatively weaker than the strong-type shock, but
their absolute strength can be arbitrarily small or large, so we prefer to use \emph{-type}.} and \defm{strong-type}. 
We encounter another major issue in reflection:
which of these two should occur? \cite{elling-liu-pmeyer} have discussed this question for a different problem.

We call shocks \defm{transonic} if the downstream side is subsonic, 
\defm{supersonic} if both sides are supersonic. The weak shock is transonic for $\tau>\tau_s$ for some threshold
$\tau_s<\tau_*$, supersonic otherwise; the strong-type shock is always transonic.
In this paper we consider only transonic RR.

\subsection{Self-similar reflection problems}

Some variants of the reflection problem are 
\defm{self-similar}\footnote{also called \defm{quasi-steady} or \defm{pseudo-steady}} 
flow rather than steady.

In self-similar flow, density and velocity are functions of the \emph{similarity coordinates} 
$(\xi,\eta)=(x/t,y/t)$ rather than $x,y$. Patterns grow linearly in time, with $t\downarrow 0$ corresponding to ``zooming infinitely far away''
whereas $t\uparrow+\infty$ is like ``zooming into the origin'' or ``scaling up''. 
Here inviscid models are easily justified because any flow feature eventually grows beyond the 
length scale where dissipate or kinetic phenomena matter\footnote{unless these small-scale phenomena trigger 
large-scale effects like turbulence, boundary layer separation etc.}.
Self-similar reflections occur naturally in many 
experiments (see Figure \myref{fig:experiment}, \cite{henderson-etal,ben-dor-shockwaves2006}).

We consider three parameters (see Figure \ref{fig:bigger90} left): 
$M_1$, the $1$-sector Mach\newcounter{fnmach}\setcounter{fnmach}{\value{footnote}}
are defined number, $\alpha$, clockwise angle from opposite wall to incident shock, and 
$180^\circ-\theta$, clockwise angle from opposite wall to reflection wall.
The opposite wall passes\footnote{to satisfy a slip condition on the opposite wall} through $\vec\xi=\vec v_2$. 
Mach number and velocity are defined for an observer traveling in the reflection point.

For $t\downarrow 0$ this yields the initial 
data\footnote{If the incident shock forms a right angle to the upstream wall, this problem is familiar 
\cite{chen-feldman-selfsim-journal,elling-rrefl}. Note that the nonvertical cases also arise from certain $t<0$ flows;
in particular they can arise in simple experiments like Figure \myref{fig:experiment}}
seen in Figure \myref{fig:bigger90} left. Depending on $\theta$ either RR or MR appear.

If we choose the opposite wall perpendicular to the reflected shock, then local RR extends to a global \defm{trivial RR} 
(see Figure \myref{fig:bigger90} center).

\begin{figure}
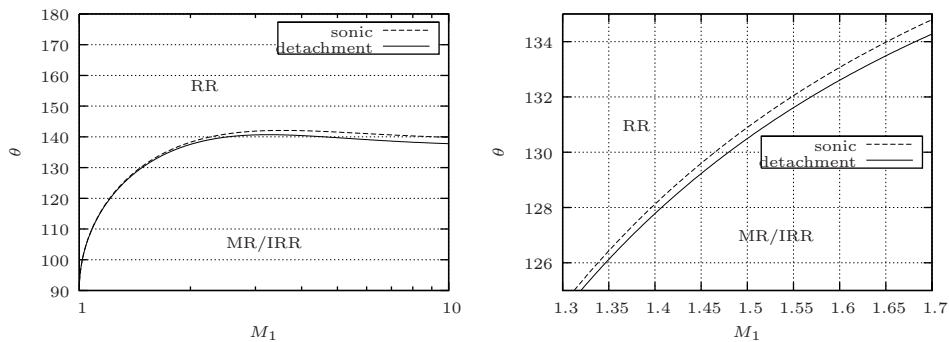

\input{rr-hor-potf.pstex_t}
\input{rr-hor-detail-potf.pstex_t}
\caption{Left: $\theta_d,\theta_s$ for $\gamma=7/5$ potential flow 
    and $\alpha=0$ in Figure \myref{fig:bigger90}.
    Right: detail.}
\label{fig:horrefl}
\end{figure}

\subsection{Transition}

The two transition criteria specify transition angles $\theta_d$ (detachment) and 
$\theta_s>\theta_d$ (sonic) depending on $L_1$, $\alpha$, $\gamma$.
Global RR is predicted for larger $\theta$ and IRR for smaller $\theta$. 
Figure \myref{fig:horrefl} compares the two criteria in the case of 
$\gamma=7/5$ polytropic potential flow.

To quote \cite{ben-dor-book}: 
\myquote{%
    ``For this reason it is almost impossible to distinguish experimentally between the sonic and detachment
    criteria.'' 
}
Experimental and numerical accuracy are affected by 
viscosity/heat conduction\footnote{Observations (e.g.\ \cite{van-dyke} p.\ 142f)
    agree with our analytical solutions, so inviscid models are clearly suitable.
    Experiments \cite{henderson-etal} show that,
    although viscous/boundary layer effects can have a transient 
    effect on the transition $\theta$, for sufficiently large times the 
    transition is close to the inviscid predictions $\theta_s,\theta_d$.},
non-equilibrum effects, turbulence, surface roughness and other systematic errors 
as well as noise. 
The interaction of physical or numerical boundary layers 
with RR causes \defm{spurious Mach stems} \cite[Figure 7a]{colella-woodward-review}
that make it look like MR, however boundaries can be avoided by reflection into an interior problem.

Although the question has remained open, the sonic condition appears to have been favored by many 
researchers (including the author), 
at least for small $M_1$. As the recent survey \cite{ben-dor-shockwaves2006} states, 
\myquote{%
    ``[...] the {[criterion]} which best agrees with pseudo-steady shock tube experimental data [...]
    suggests that in pseudo-steady flows RR terminates when the flow behind
    the reflection point, R [...] becomes sonic in a frame of reference attached to R.''
}

It should be noted that these quotes refer only to the classical case $\alpha=90^\circ$ and hence
$\theta<90^\circ$, i.e.\ \emph{vertical} incident shock (see Figure myref{fig:experiment} second left). 
Here we consider some cases with $\theta>90^\circ$ and $\alpha<90^\circ$ because they can be solved
by linearization around trivial RR, i.e.\ small-data techniques. 
The classical case requires a large-data approach as in \cite{elling-liu-pmeyer}; 
this will be subject of future research.
However, the nature of the question is the same in all cases: how is local RR affected by 
various kinds of perturbation. The classical perturbation occurs naturally in some experiments, but 
there is no other reason to favor it.

\subsection{Results}

We prove, for the self-similar reflection problem modeled with
potential flow, that the sonic criterion is not universally correct.
We use the following formulation\footnote{This version is a weak as possible, 
    by considering ``generic'' instead of all, and by requiring
    structural stability.}:
\myquote{%
    \emph{Generic} local transonic RR cannot extend into structurally stable global RR.
}

Instead, Theorem \myref{th:sonic-wrong} shows:
\myquote{%
    1. Trivial weak-type transonic RR is structurally stable.
}
In particular, the parameter space has an open nonempty --- hence generic --- subset 
with extendable local RR.

\addtocounter{footnote}{1}
\footnotetext{Here we mean decay in space, not in time.}
More importantly, we identify a physical reason for the failure of the sonic criterion.
\newcounter{fndecay}\setcounter{fndecay}{\value{footnote}}
The information argument (see above) indeed goes a long way towards the correct answer.
But interestingly, it is too restrictive in a subtle way:
\myquote{%
    2. For \emph{weak}-type transonic reflections, downstream perturbations \emph{can} 
    reach the reflection point, but they \emph{decay to zero}\footnotemark[\value{fndecay}] in the process.
}
This suggests that --- although a proof is given only for particular parameters --- the sonic criterion is 
incorrect for \emph{most}, if not all, parameters, in particular including the classical
case $\alpha=90^\circ$.

We demonstrate the principle for a particular variant of the reflection problem in potential flow. 
However, it will become clear
during the course of the proof of Theorem \myref{th:sonic-wrong} that gradient regularity near the reflection point
is a local property of elliptic PDE and their boundary conditions in a domain corner. Hence 
the same principle applies to other steady or self-similar variants. 
Moreover, the same regularity effect occurs in isentropic and non-isentropic Euler flow.

A wider range of parameters, Euler flow and the validity of the detachment criterion will 
be discussed in separate articles.
A third important criterion, the \defm{von Neumann criterion}\footnote{also referred to as 
\defm{mechanical equilibrum criterion} in some contexts}, does not apply at all in potential 
flow\footnote{Even in Euler flow it applies only for large $M_1$, for example $M_1>2.2...$ 
for $\gamma=7/5$.}.

\cite{elling-liu-pmeyer} previously provided a rigorous construction of supersonic weak-type reflections 
in a different problem. The techniques in this article are apparently sufficient to extend the construction 
to \emph{transonic} cases.

Considerations analogous to the sonic criterion have also been used in studying the transitions between
different types of Mach reflection (see the survey in \cite{ben-dor-shockwaves2006} for details). 
Our findings suggest 
modifications to these proposals as well, by replacing subsonic-ness with weak-type in some way.

\subsection{Other remarks}

Many articles have considered \emph{dynamic} 
stability\footnote{stability under perturbations to the initial data}, trying to show that at the linear 
level weak-type reflections are stable while strong-type are not. 
However, numerical calculations \cite[Figure 3]{elling-liu-rims05} suggest that both types are 
dynamically stable.

\cite{hornung-weakstrong} has previously proposed a plausibility argument, 
based on pressure changes, for stability
of weak-type transonic (and instability of strong-type) shocks. \cite{chen-zhang-zhu} 
show existence and structural
stability of \emph{supersonic} reflections from a wedge.
\cite{chen-feldman-selfsim-journal} have constructed global supersonic RR for 
$\alpha=90^\circ$ and $\theta\approx90^\circ$ as exact solutions of self-similar potential flow. \cite{elling-rrefl}
shows existence of global supersonic RR for a range of parameters that includes, in some cases, all $\theta>\theta_s$, proving
that criteria more restrictive than sonic cannot be universally correct.

It has been proposed that both RR and MR may occur for the same parameters in steady flow, with hysteresis effects
when parameters are changed (see e.g.\ \cite{ben-dor-ivanov-vasilev-elperin}). In self-similar flow this would
amount to non-uniqueness for an initial-value problem. 
Indeed, \cite{elling-hyp2004,elling-nuq-journal} has found a set
of initial data for the 2d Euler equations (both isentropic and non-isentropic) that appears to have two
solutions, one theoretical, the other clearly different and observed in all numerical calculations\footnote{In addition it is shown that the Godunov scheme can converge to either solution, depending on the grid.}
For isentropic Euler, a rigorous proof of a different non-uniqueness example has recently been proposed
\cite{de-lellis-szekelyhidi}. 

However, both results depend strongly on vorticity; uniqueness for the potential flow Cauchy problem
is still expected and hysteresis is unlikely except as a transient phenomenon.

\section{Self-similar potential flow}

Here we prove technical results which are not previously available in the literature.

\subsection{Equations}

2d isentropic Euler flow is a PDE system for a density field $\rho$ and velocity field $\vec v$,
consisting of the continuity equation
\begin{alignat}{1}
  \rho_t + \nabla\cdot(\rho\vec v) &= 0 \mylabel{eq:continuity}
\end{alignat}
and the momentum equations
\begin{alignat}{1}
  (\rho\vec v)_t+\nabla\cdot(\rho\vec v\otimes\vec v)+\nabla p &= 0
\end{alignat}
The pressure $p$ is a strictly increasing smooth function of $\rho$. The \defm{sound speed} $c$ is 
$$c=\sqrt{\frac{dp}{d\rho}(\rho)}.$$

If we assume irrotationality
$$\nabla\times\vec v,$$
then we may take
$$\vec v=\nabla\phi$$
for a scalar potential $\phi$. Assuming smooth flow, the momentum equations yield
\begin{alignat}{1}
  \rho &= \pi^{-1}(A-\phi_t-\frac12|\nabla\phi|^2) \myeqlabel{eq:rhofun}
\end{alignat}
where $A$ is a global constant and where 
\begin{alignat}{1}
    \frac{d\pi}{d\rho}=\frac{1}{\rho}\cdot\frac{dp}{d\rho}=\rho^{-1}c^2. \myeqlabel{eq:pideriv}
\end{alignat}
The remaining continuity equation \myeqref{eq:continuity} is \defm{unsteady potential flow}.

For any $t\neq0$ we may change from standard coordinates $(t,x,y)$ to \defm{similarity coordinates} $(t,\xi,\eta)$
with $\vec\xi=(\xi,\eta)=(x/t,y/t)$. A flow is \defm{self-similar} if $\rho,\vec v$ are functions of $\xi,\eta$ alone, without
explicit dependence on $t$. In potential flow that corresponds to the ansatz
$$\phi(t,x,y)=t\psi(x/t,y/t).$$
By differentiating the divergence form \eqref{eq:continuity} of potential flow and using \myeqref{eq:rhofun} and \myeqref{eq:pideriv}, 
we obtain the form
\begin{alignat}{1}
  (c^2I-(\nabla\psi-\vec\xi)^2):\nabla^2\psi &= 0. \myeqlabel{eq:nondivpsi}
\end{alignat}
Here $A:B$ is the Frobenius product $\trace(A^TB)$, $\vec w^2:=\vec w\otimes\vec w=\vec w\vec w^T$ (not $\vec w^T\vec w$) 
and $\nabla^2$ is accordingly the Hessian. In coordinates:
\begin{alignat*}{1}
  (c^2-(\psi_\xi-\xi)^2)\psi_{\xi\xi}-2(\psi_\xi-\xi)(\psi_\eta-\eta)\psi_{\xi\eta}
  +(c^2-(\psi_\eta-\eta)^2)\psi_{\eta\eta} &= 0.
\end{alignat*}
It is sometimes more convenient to use the \emph{pseudo-potential}
$$\chi:=\psi-\frac12|\vec\xi|^2$$
which yields
\begin{alignat}{1}
  (c^2I-\nabla\chi^2):\nabla^2\chi + 2c^2 - |\nabla\chi|^2 = 0. \myeqlabel{eq:nondivchi}
\end{alignat}
We choose $A=0$ so that
\begin{alignat}{1}
    \rho &= \pi^{-1}\big(-\chi-\frac12|\nabla\chi|^2\big). \myeqlabel{eq:pi-ss}
\end{alignat}
\myeqref{eq:nondivchi} is manifestly translation-invariant. Translation is nontrivial: in $(t,x,y)$ 
coordinates it corresponds
to a change of inertial frame
$$\vec v\leftarrow\vec v-\vec w,\qquad \vec\xi=\vec x/t \leftarrow\vec\xi-\vec w,$$
where $\vec w$ is the velocity of the new frame relative to the old one. Obviously the \defm{pseudo-velocity}
$$\vec z:=\nabla\chi=\nabla\psi-\vec\xi$$
does not change.

Self-similar potential flow is mixed-type; the local type is determined by the coefficient matrix $c^2I-\nabla\chi^2$
which is positive definite if and only if $L<1$, where 
$$L:=\frac{|\vec z|}{c}=\frac{|\vec v-\vec x/t|}{c}$$
is called \defm{pseudo-Mach number}; for $L>1$ the equation is hyperbolic.

\subsection{Shock conditions}

The weak solutions of potential flow are defined by the divergence-form continuity equation \eqref{eq:continuity}.
Its self-similar form is
$$\nabla\cdot(\rho\nabla\chi)+2\rho = 0.$$
The corresponding Rankine-Hugoniot condition is 
\begin{alignat}{1}
  \rho_uz^n_u &= \rho_dz^n_d \myeqlabel{eq:rh-z} 
\end{alignat}
where $u,d$ indicate the limits on the \defm{upstream} and \defm{downstream} side and $z^n$, $z^t$ are the normal and tangential
component of $\vec z$.
As the equation is second-order, we must additionally require continuity of the potential:
\begin{alignat}{1}
  \psi^u &= \psi^d.
\end{alignat}
By taking a tangential derivative, we obtain
\begin{alignat}{1}
  z^t_u &= z^t_d =: z^t.
\end{alignat}

Observing that $\sigma=\vec\xi\cdot\vec n$ is the shock speed, we obtain the more familiar form
\begin{alignat}{1}
  \rho_uv^n_u - \rho_dv^n_d &= \sigma(\rho_u-\rho_d), \myeqlabel{eq:rh-v} \\
  v^t_u &= v^t_d =: v^t. \myeqlabel{eq:vtan}
\end{alignat}

Fix the unit shock normal $\vec n$ so that $z^n_u>0$ which implies $z^n_d>0$ as well.
To avoid expansion shocks we must require the admissibility condition $z^n_u\geq z^n_d$, which is equivalent to
\begin{alignat}{1}
  v^n_u &\geq v^n_d.
\end{alignat}
We chose the unit tangent $\vec t$ to be $90^\circ$ counterclockwise from $\vec n$.

By \myeqref{eq:vtan} the tangential components of the velocity are continuous across the shock,
so the velocity jump is normal. 
Assuming $v^n_u>v^n_d$ (positive shock strength), we can express the shock normal as 
\begin{alignat}{1}
  \vec n &= \frac{\vec v_u-\vec v_d}{|\vec v_u-\vec v_d|}. \myeqlabel{eq:normal-v}
\end{alignat}

\subsection{Shock polar}

In our problem the upstream regions are constant and determined. Let $\psi$ be the potential in the downstream
region, $\psi^I$ the potential upstream (ditto for $\chi$, $\rho$, ...).
We substitute \myeqref{eq:normal-v} into \myeqref{eq:rh-z} to obtain the shock condition
\begin{alignat}{1}
  g(\nabla\psi,\psi,\vec\xi) := \big(\pi^{-1}(-\chi-\frac12|\nabla\chi|^2)\nabla\chi-\rho^I\nabla\chi^I\big)
  \cdot\frac{\nabla\psi^I-\nabla\psi}{|\nabla\psi^I-\nabla\psi|} = 0. \myeqlabel{eq:g}
\end{alignat}
The shock polar (see Figure \myref{fig:spolar-right}) is the curve of $\vec v_d$ that we obtain
when holding the shock in a fixed $\vec\xi$ and keeping the upstream constant while varying the normal. 
For a fixed $\vec\xi$, $\nabla\chi^I$ is fixed and $\psi=\psi(\vec\xi)=\psi^I(\vec\xi)$ is fixed as well.
Having eliminated the normal in \myeqref{eq:g}, we see that the shock 
polar is the curve of solutions $\vec v=\nabla\psi$
of $g(\vec v,\psi,\vec\xi)=0$. Hence the vector
$$g_{\vec v}=(\frac{\partial g}{\partial v_1},\frac{\partial g}{\partial v_2})$$
is normal to the shock polar, by the implicit function theorem.
Omitting a positive scalar factor, it is given by the explicit formula
\begin{alignat}{1}
  g_{\vec v} &\sim (1-(z^n_d/c)^2)\vec n-z^t(\frac{1}{z^n_u}+c^{-2}z^n_d)\vec t, \myeqlabel{eq:gv}
\end{alignat}
as we show in \myeqref{eq:g-vecv}.

For transonic shocks, which are our focus,
the downstream is elliptic, i.e.\ $1>L_d=|\vec z_d|/c\geq z^n_d/c$. In this case the coefficient of
$\vec n$ in \myeqref{eq:gv} is necessarily positive, so $g_{\vec v}\neq 0$. 

In Figure \myref{fig:spolar-right} right the leftmost point of the polar is a \emph{pseudo-normal} 
shock: $z^t=0$. In this case
$g_{\vec v}$ points in the same direction as $\vec n$, hence right.
Therefore $g_{\vec v}$ is an \emph{inner} normal\footnote{not necessarily unit} 
to the admissible part of the shock polar. 

In local RR the reflected shock must yield $\vec v_3$ parallel to the wall. 
In Figure \myref{fig:spolar-right} right, $\vec v_d$ for the weak shock 
(base in origin, tip in W) forms a blunt
angle with inner normals of the shock polar whereas $\vec v_d$ for the strong shock (tip in $K$) 
forms a sharp angle. 
For the critical angle there is a single shock which is a limit of the weak and strong sides, so the 
angle is right (see $\tau_*$ in Figure \myref{fig:spolar-right} right). 
This motivates the following definition:
\begin{definition}
    \label{def:type}%
  A shock is called \defm{weak-type} (in a particular point $\vec\xi$ in self-similar coordinates) if 
  \begin{alignat}{1}
    g_{\vec v}\cdot\vec z_d &< 0, \myeqlabel{eq:weaktype}
  \end{alignat}
  \defm{strong-type} if negative, \defm{critical-type} if zero. 
\end{definition}
The definition has three pleasant properties: it coincides with the standard definition in the case 
of strictly convex polars,
it generalizes the definition of weak/strong-type to non-convex cases\footnote{In such cases, there
may be three or more reflected shocks that yield $\vec v_3$ tangential to the wall.}, and finally 
the sign condition is precisely what is needed for elliptic corner regularity.

\subsection{Polytropic pressure}

Throughout the paper we consider only the standard polytropic pressure law:
$$p(\rho)=\frac{c_0^2\rho_0}{\gamma}\big(\frac{\rho}{\rho_0}\big)^\gamma$$
with $\gamma\in(1,\infty)$, where $c_0,\rho_0$ are constants. With this choice,
$$c^2=c_0^2\big(\frac{\rho}{\rho_0}\big)^{\gamma-1}.$$

\begin{theorem}
  \label{th:shockpolar}%
    Consider arbitrary $c_u,\rho_u>0$ and $M_u\in(1,\infty)$ and set $\vec v_u=(M_uc_u,0)$.
    For each $\beta\in(-90^\circ,90^\circ)$ there is a steady shock 
    with downstream unit normal $\vec n=(\cos\beta,\sin\beta)$.
    Its downstream state $\rho_d,c_d,\vec v_d$ depends smoothly on $\beta$. 
    Let $\tau$ be counterclockwise angle from
    $\vec v_u$ to $\vec v_d$.
    We restrict 
    $|\beta|<\arccos\frac{1}{M_u}$
    so that the shock is admissible.

    Then the \defm{shock polar} $\beta\mapsto\vec v_d$ is smooth and strictly convex, 
    with $\partial_\beta\vec v_d$ 
    nowhere zero. 

    There is an angle $\tau_*\in(0^\circ,90^\circ)$ so that each $\tau\in(-\tau_*,\tau_*)$
    is attained for two different $\beta$.
    The one with smaller $|\vec v_d|$ yields a strong-type shock, the other one weak-type.
    For $|\tau|=\tau_*$ they are identical and critical-type.

    There is a $\tau_s\in(0,\tau_*)$ so that the weak-type shocks are supersonic
    for $|\tau|>\tau_s$, transonic for $|\tau|<\tau_s$. The other types are always transonic.
\end{theorem}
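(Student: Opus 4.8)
The plan is to track the downstream flow Mach number $L_d=|\vec v_d|/c_d$ as the shock rotates along the polar, parametrising by the upstream normal Mach number $m:=M_u\cos\beta$, which decreases monotonically from $M_u$ (normal shock, $\beta=0$) to $1$ (admissibility limit $|\beta|=\arccos\frac1{M_u}$). Across a steady potential-flow shock the tangential velocity is continuous by \eqref{eq:vtan}, and continuity of the potential together with \eqref{eq:rhofun} gives the Bernoulli relation $\tfrac12(v^n_u)^2+\pi(\rho_u)=\tfrac12(v^n_d)^2+\pi(\rho_d)$; since $v^t$ cancels, the density ratio $r:=\rho_d/\rho_u$ and $c_d^2=c_u^2r^{\gamma-1}$ depend on $m$ alone, via the normal-shock relations already recorded. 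With $v^n_d=mc_u/r$ and $(v^t)^2=c_u^2(M_u^2-m^2)$ this yields the explicit formula
\[
  L_d^2 = r^{-(\gamma-1)}\big(M_u^2-m^2(1-r^{-2})\big),
\]
whose sign relative to $1$ is the single object to analyse.

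First I would settle the endpoints. As $m\downarrow1$ the shock degenerates ($r\to1$, $\vec v_d\to\vec v_u$) and $L_d\to M_u>1$, so near the weak limit the reflected flow is supersonic. At $\beta=0$ the shock is normal, $v^t=0$, and $L_d=m/r^{(\gamma+1)/2}$; the monotonicity of the compressive normal-shock map gives $L_d<1$, i.e.\ the normal shock is downstream-subsonic. By the intermediate value theorem $L_d^2-1$ changes sign, so at least one \emph{sonic} shock ($L_d=1$) exists.

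The conceptual heart is to locate the sonic shock relative to the critical shock, and here the polar gradient $g_{\vec v}$ does the work. Evaluating \eqref{eq:gv} against $\vec z_d=z^n_d\vec n+z^t\vec t$ gives
\[
  g_{\vec v}\cdot\vec z_d \;\sim\; z^n_d\big(1-(z^n_d/c_d)^2\big)-(z^t)^2\big(\tfrac{1}{z^n_u}+z^n_d/c_d^2\big).
\]
At a sonic point $(z^n_d)^2+(z^t)^2=c_d^2$, and since such a point is never the normal shock we have $z^t\neq0$, hence $a:=1-(z^n_d/c_d)^2>0$ and $(z^t)^2=ac_d^2$. Substituting collapses the right-hand side to $-a\,c_d^2/z^n_u<0$. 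By Definition \ref{def:type} and \eqref{eq:weaktype} this shows \emph{every} sonic shock is strictly weak-type; equivalently the critical shock (where $g_{\vec v}\cdot\vec z_d=0$) is strictly downstream-subsonic. Thus every sonic shock lies strictly on the weak branch, away from the critical shock, which already forces $\tau_s<\tau_*$ and shows that critical- and strong-type shocks are always transonic.

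It remains to promote ``at least one sonic shock, on the weak branch'' to a single threshold $\tau_s$, and this is where I expect the real work. I would prove that $L_d^2$ is strictly monotone in $m$ (hence in $\beta$) on $(1,M_u)$ by differentiating the displayed formula for $L_d^2$, using the sign $dr/dm>0$ of the compressive branch; monotonicity then gives a unique sonic $m_s$, a unique $\beta_s$ on the weak branch, and a single $\tau_s=|\tau(\beta_s)|\in(0,\tau_*)$, with $L_d>1$ toward the weak limit and $L_d<1$ between $\tau_s$ and $\tau_*$. The main obstacle is exactly this monotonicity: the implicit dependence $r=r(m)$ through the transcendental relation $(\gamma-1)m^2(1-r^{-2})=2(r^{\gamma-1}-1)$ makes $\frac{d}{dm}L_d^2$ a mixed rational-and-power expression whose sign must be pinned down uniformly in $\gamma\in(1,\infty)$ and $m\in(1,M_u)$. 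I would eliminate $dr/dm$ by implicit differentiation of that relation, reduce the inequality to a manifestly-signed combination, and verify the degenerate limits $m\to1$ and $\gamma\to1,\infty$ as a safeguard.
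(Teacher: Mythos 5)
Your sign computation at the sonic point is in fact the paper's own key identity: the paper factors $g_{\vec v}\cdot\vec v_d=v^n_d\big((1-M_d^2)-\frac{(v^t)^2}{v^n_dv^n_u}\big)$ from \myeqref{eq:gv} and concludes that $M_d\ge1$ forces weak-type, and your endpoint/intermediate-value scheme driven by monotonicity of the downstream Mach number along the polar is the same skeleton as the paper's final paragraph. (Your orientation --- supersonic near the vanishing shock, transonic on $(\tau_s,\tau_*)$ --- matches the paper's proof and its introduction; the reversed inequalities in the theorem's last sentence are evidently a typo.) But the proposal covers only the last paragraph of the theorem. Strict convexity of the polar, $\partial_\beta\vec v_d\neq0$, the existence of $\tau_*$, the exactly-two-shocks statement for $|\tau|<\tau_*$, and the identification of the intersection nearer the origin as strong-type are never argued. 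These are not peripheral: the paper proves convexity here (setting $q=\vec n-A\vec t$, showing $\partial_\beta A\le 0$ from monotonicity of $c_d$, $v^n_u$, $M^n_d$, $v^t$, and computing $q\times\partial_\beta q=1-\partial_\beta A+A^2>0$ on a half-polar), and it is precisely this convex two-branch structure, together with $g_{\vec v}$ being an inner normal, that makes your phrase ``lies strictly on the weak branch'' and the threshold $\tau_*$ meaningful.

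Second, the step you yourself flag as ``the real work'' --- strict monotonicity of $L_d^2$ in $m$ through the relation $(\gamma-1)m^2(1-r^{-2})=2(r^{\gamma-1}-1)$ --- is a plan, not a proof, and uniqueness of $\tau_s$ together with the stated sign pattern stands or falls with it. Notably, the paper never performs this computation either: it imports ``$M_d$ strictly decreasing in $|\beta|$'' (along with existence, smooth dependence, and $\partial_\beta\vec v_d\neq0$) from Proposition \pmref{prop:shockpolar} of \cite{elling-liu-pmeyer}, and you could legitimately close your gap by the same citation rather than by the delicate direct sign analysis you sketch. Finally, a small logical repair: ``every sonic shock is weak-type'' does not by itself exclude a \emph{supersonic} critical- or strong-type shock, so your ``equivalently'' is a leap; you need the inequality for all $M_d\ge 1$, which your own algebra delivers --- admissibility gives $z^n_d<c_d$, hence $a>0$ and $(z^t)^2\ge a c_d^2$ whenever $M_d\ge1$, with the same bound $g_{\vec v}\cdot\vec z_d\le-a c_d^2/z^n_u<0$ --- and this is exactly the generality in which the paper states it.
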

\begin{proof}
    We refer to \cite{\pmpaper}, especially Proposition \pmref{prop:shockpolar}, which establishes
    existence and smooth dependence of admissible shocks. By \pmc{\pmeqref{eq:DvdyDnva}}
    $\partial_\beta\vec v\neq 0$ at all $\beta$. 

    As shown earlier, $g_{\vec v}$ in \myeqref{eq:gv} is an inner normal to the shock polar everywhere.
    Multiply it with a positive factor to obtain $q=\vec n-A\vec t$
    where
    $$A=\frac{v^t(1/v^n_u+M^n_d/c_d)}{1-(M^n_d)^2}.$$
    $A$ is decreasing in $\beta\leq0$, because by \pmc{Proposition \pmref{prop:shockpolar}}
    $c_d>0$ is increasing, $v^n_u>0$ is increasing, $M^n_d>0$ is decreasing, 
    $v^t>0$ is decreasing. Hence $\partial_\beta A\leq0$. 
    Moreover
    $$\partial_\beta q = A\vec n+(1-(\partial_\beta A))\vec t,$$
    ($\vec t$ is counterclockwise from $\vec n$), so
    $$q\times\partial_\beta q = 1-\partial_\beta A+A^2>0.$$
    This implies that the upper half of the shock polar is strictly convex. 
    By vertical symmetry and smoothness the entire polar is strictly convex.

    The shock polar is compact when adding the ``vanishing'' shock $\vec v_d=\vec v_u$.
    Moreover $|\tau|<90^\circ$, so there is a maximum $\tau_*\in(0^\circ,90^\circ)$. 
    By convexity there are exactly two points
    on the polar for $|\tau|<\tau_*$, which are the intersections of the line of multiples of $\vec v_d$
    with the polar. 
    As $g_{\vec v}$ is an inner normal, necessarily $g_{\vec v}\cdot\vec v_d>0$ for the point closer
    to the origin (strong-type), with opposite sign for the other (weak-type).
    
    $$g_{\vec v}\cdot\vec v=(1-(v^n_d/c_d)^2)v^n_d-(v^t)^2(1/v^n_u+c_d^{-2}v^n_d)
    =v^n_d\Big((1-M_d^2)-\frac{(v^t)^2}{v^n_dv^n_u}\Big).$$
    If $M_d\geq 1$, then the right-hand side is negative, so the shock is weak-type.
    By Proposition \pmref{prop:shockpolar}, $M_d$ is strictly decreasing in $|\beta|$, 
    so there is a unique $\tau_s$ so that the weak-type shock is transonic for $|\tau|>\tau_s$,
    supersonic for $|\tau|<\tau_s$.
\end{proof}

\section{Perturbations of weak trivial RR}

\subsection{Coordinate transform}

We consider a trivial RR as in Figure \myref{fig:bigger90} center or Figure \myref{fig:index} left.
All lines and curves exclude endpoints by default. 
We use the following notation (see Figure \ref{fig:index} left):
Let $\hat B$ be the reflection wall, $\hat A$ the opposite wall, $W$ the open convex cone enclosed by them.
Let $\vec n_A,\vec n_B$ be the outer (with respect to $W$) unit normals of $\hat A,\hat B$.
Let the origin the the corner between $A,B$.
Let $S$ the reflected shock, $\vec\xi_A=(\xi_A,0)$, $\vec\xi_B=(\xi_B,\eta_B)$ (note $\xi_A=\xi_B$) the points 
where it meets $\hat A$ resp.\ $\hat B$. Let $A,B$ be the segments of $\hat A,\hat B$ 
from the corner $(0,0)$ to 
$\vec\xi_A,\vec\xi_B$; let $\Omega$ be the triangle enclosed by $A,B,S$.

The velocity in $\Omega$ is zero in the chosen coordinates,
so the velocity potential $\psi$ is constant $=\psi^0$ in $\Omega$.
Let $\vec v_I=(v^x_I,0)$ be the $2$-sector, $\psi^I$ the corresponding potential.

In self-similar flow, $\Omega$ is a uniformly elliptic region whereas the rest of $W$ is uniformly hyperbolic. 

The shock is a free boundary. To linearize the problem, we first devise a transform from
$\vec\xi=(\xi,\eta)$ to fixed coordinates $\vec\sigma=(\sigma,\zeta$).

Given a function $\psi\in C^2(\Omega)\cap C^1(\overline\Omega)$.
Consider a ray starting in the origin and passing through $(\xi_A,\zeta)\in S$. 
$\psi^I$ is strictly monotone along any such ray, so there is a unique point
$\vx$ with 
\begin{alignat}{1}
    \psi^I(\vx) &= \psi(\xi_A,\zeta). \myeqlabel{eq:psimatch}
\end{alignat}
$(\sigma,\zeta)\in\Omega$ is mapped to $(\frac{\xi\xi_A}{\sigma},\frac{\eta\xi_A}{\sigma})$.

This coordinate transform allows to state our problem in a \emph{fixed} domain $\Omega$.
By \myeqref{eq:psimatch}, $\psi$ mapped to $\vec\xi$ coordinates
satisfies the first shock condition, $\psi=\psi^I$, automatically.
Then \myeqref{eq:g} can be used as the second shock condition.

\subsection{Linearization}

\mylabel{section:linearization}

We regard our problem as an operator equation
$$F(\psi)=0$$ 
where $F:X\rightarrow Y$, $X,Y$ Banach spaces with $X\subset C^2(\Omega)\cap C^1(\overline\Omega)$.
The map is the composition of three steps: first $\psi\in X$ is transformed from $(\sigma,\zeta)$
to $\vx$ coordinates, then mapped to the tuple
\begin{alignat}{2}
  & \Big((c^2I-\nabla\chi^2):\nabla^2\psi, &\qquad\text{[Interior]} \notag \\
  & \nabla\psi_{|A}\cdot\vec n_A, &\qquad\text{[Slip condition at $A$]} \notag \\
  & \nabla\psi_{|B}\cdot\vec n_B, &\qquad\text{[Slip condition at $B$]} \notag \\
  & g(\nabla\psi,\psi,\vec\xi)\Big). &\qquad\text{[Shock condition]}
  \myeqlabel{eq:nlmap}
\end{alignat}
Finally, pull back this tuple to $(\sigma,\zeta)$ coordinates. 

$X$ will be specified later since we have to consider an entire scale of such spaces.
$F$ will be a nonlinear $C^1$ map from $X$ to $Y$.
We intend to apply the implicit function theorem. To this end we need to study
the Fr\'echet derivative $F'(\psi^0)$ of $F$ with respect to $\psi$ at $\psi=\psi^0$.

The derivative is computed by considering a first variation $\psi'\in X$ of $\psi^0$.
We consider $\psi(\vec\sigma)=\psi^0+t\psi'(\vec\sigma)$ 
and compute the derivative $\frac{d}{dt}$ of $F(\psi)$
and other expressions, evaluated at $t=0$. This derivative will be written $\vec\xi'$, $\rho'$, 
$F(\psi)'$, etc. Obviously the usual calculus rules apply. 

The following calculations are quite similar to \pmc{Proposition 4.14.3}. 
The results are simplified by two facts: 
$\psi=\psi^0$ yields an identity $(\xi,\eta)=(\sigma,\zeta)$, and 
$\nabla\psi^0=\nabla^2\psi^0=0$. 

All derivatives are evaluated at $\psi=\psi^0$; we omit arguments where they are clear from the context.

\begin{alignat}{1}
  (\nabla_{\vec\xi}\psi)' 
  &= \big(\nabla_{\vec\xi}\vec\sigma^T\nabla_{\vec\sigma}\psi\big)' \notag\\
  &= \big(\nabla_{\vec\xi}\vec\sigma^T\big)'\subeq{\nabla_{\vec\sigma}\psi}{=0}
  +\nabla_{\vec\xi}\vec\sigma^T(\nabla_{\vec\sigma}\psi)' 
  =\nabla_{\vec\xi}\vec\sigma^T\nabla_{\vec\sigma}\psi' = \nabla_{\vec\xi}\psi'.\notag
\end{alignat}
\begin{alignat}{1}
  (\nabla^2_{\vec\xi}\psi)'
  &= \Big(\sum_k\frac{\partial\psi}{\partial\sigma^k}\nabla_{\vec\xi}^2\sigma^k
  +\nabla_{\vec\xi}\vec\sigma^T\nabla_{\vec\sigma}^2\psi\nabla_{\vec\xi}^T\vec\sigma\Big)' \notag\\
  &= \Big(\sum_k\frac{\partial\psi}{\partial\sigma^k}\Big)'\nabla_{\vec\xi}^2\sigma^k
  + \sum_k\subeq{\frac{\partial\psi}{\partial\sigma^k}}{=0}\big(\nabla_{\vec\xi}^2\sigma^k\big)' \notag\\
  &+ \big(\nabla_{\vec\xi}\vec\sigma^T\big)'\subeq{\nabla_{\vec\sigma}^2\psi}{=0}\nabla_{\vec\xi}^T\vec\sigma
  +\nabla_{\vec\xi}\vec\sigma^T\big(\nabla_{\vec\sigma}^2\psi\big)'\nabla_{\vec\xi}^T\vec\sigma
  +\nabla_{\vec\xi}\vec\sigma^T\subeq{\nabla_{\vec\sigma}^2\psi}{=0}(\nabla_{\vec\xi}^T\vec\sigma)' \notag\\
  &= \sum_k\frac{\partial\psi'}{\partial\sigma^k}\nabla_{\vec\xi}^2\sigma^k
  +\nabla_{\vec\xi}\vec\sigma^T\nabla_{\vec\sigma}^2\psi'\nabla_{\vec\xi}^T\vec\sigma = \nabla_{\vec\xi}^2\psi'\notag
\end{alignat}

Fr\'echet derivative of the interior equation:
\begin{alignat}{1}
  0 &= \big(c^2I-(\nabla_{\vec\xi}\chi)^2\big)':\subeq{\nabla_{\vec\xi}^2\psi}{=0} + 
  \big(c^2I-(\nabla_{\vec\xi}\chi)^2\big):(\nabla_{\vec\xi}^2\psi)' \notag\\
  &= \big(c^2I-(\nabla_{\vec\xi}\chi)^2\big):\nabla_{\vec\xi}^2\psi'
  = \big(c^2I-\vec\xi^2\big):\nabla_{\vec\xi}^2\psi'
  \myeqlabel{eq:pde-lin} 
\end{alignat}
The resulting right-hand side is a linear elliptic operator without zeroth-order term,
applied to $\psi'$. The classical maximum principle shows that $\psi'$ cannot have a minimum in the interior.

The wall conditions linearize to
\begin{alignat}{1}
  \nabla\psi'\cdot\vec n &= 0. \myeqlabel{eq:wall-lin}
\end{alignat}

For the shock condition, we consider \myeqref{eq:g}.
First, hold $\vx,\psi$ fixed and very $\nabla_{\vec\xi}\psi$.
The variation of the normal expression \myeqref{eq:normal-v} is
\begin{alignat}{1}
    (\frac{\vec v_I-\nabla_{\vec\xi}\psi}{|\vec v_I-\nabla_{\vec\xi}\psi|})'
    &= \frac{-1}{|\vec v_I-\nabla_{\vec\xi}\psi|}
    \subeq{\left(1-\big(\subeq{\frac{\vec v_I-\nabla_{\vec\xi}\psi}{|\vec v_I
                    -\nabla_{\vec\xi}\psi|}}{=\vec n}\big)^2\right)}{=(\vec t)^2}\nabla_{\vec\xi}\psi'
    =\frac{-(\psi')_t}{|\vec v_I-\nabla_{\vec\xi}\psi|}\vec t \myeqlabel{eq:fdnor}
\end{alignat}
Moreover, 
\begin{alignat*}{1}
    (\rho\nabla_{\vx}\chi-\rho^I\nabla_{\vx}\chi^I)' 
    &\overset{\text{\myeqref{eq:pi-ss}}}= \Big(\pi^{-1}(-\chi-\frac12|\nabla_{\vx}\chi|^2)\nabla_{\vx}\chi\Big)' \\
    &\overset{\text{\myeqref{eq:pideriv}}}= \rho(I-c^{-2}\nabla_{\vx}\chi^2)\nabla\psi'.
\end{alignat*}
Both combined, we use the shock relations $\rho_I=\rho\chi_n/\chi^I_n$ and $\chi_t=\chi^I_t$ to compute
\begin{alignat}{1}
    (g(\nabla_{\vx}\psi,\psi,\vx))' 
    &= \Big((\rho\nabla_{\vec\xi}\chi-\rho^I\nabla_{\vec\xi}\chi^I)\cdot
    \frac{\vec v_I-\nabla_{\vec\xi}\psi}{|\vec v_I-\nabla_{\vec\xi}\psi|}\Big)' \notag\\
    &= 
    \Big((\rho\nabla_{\vec\xi}\chi-\rho^I\nabla_{\vec\xi}\chi^I)\Big)'\cdot\vec n
    +
    (\rho\nabla_{\vec\xi}\chi-\rho^I\nabla_{\vec\xi}\chi^I)\cdot
    \Big(\frac{\vec v_I-\nabla_{\vec\xi}\psi}{|\vec v_I-\nabla_{\vec\xi}\psi|}\Big)' \notag\\
    &= \rho\vec n^T(I-c^{-2}\nabla_{\vx}\chi^2)\nabla_{\vx}\psi'
    -\frac{\rho\chi_t-\rho^I\chi^I_t}{|\vec v_I-\nabla_{\vx}\psi|}\psi'_t \notag\\
    &= \subeq{\rho\Big((1-c^{-2}\chi_n^2)\vec n
        -\chi_t\big(\frac{1}{\chi^I_n}+c^{-2}\chi_n\big)\vec t\Big)}{=:g_{\vec v}}\cdot\nabla_{\vx}\psi'.
    \myeqlabel{eq:g-vecv}
\end{alignat}

Now we hold $\nabla_{\vx}\psi$ fixed and vary $\psi$. 
$\psi=\psi^I=\psi^I(0,0)+v^x_I\xi$ on the shock, so we can use
\begin{alignat}{1}
    \xi' &= (v^x_I)^{-1}\psi'. \myeqlabel{eq:fdeta}
\end{alignat}
Moreover, the variation of the ``normal'' $\vec v_I-\nabla_{\xi}\psi$ is zero here, so:
\begin{alignat*}{1}
    (g)'
    &= \Big(\pi^{-1}\big(-\psi+\frac12|\vx|^2-\frac12|\nabla_{\vx}\psi-\vx|^2\big)(\nabla_{\vx}\psi-\vx)
    -\rho^I(\vec v_I-\vx)\Big)'
    \cdot\subeq{\vec n}{=(1,0)} \notag \\
    &= \big(\rho c^{-2}\cdot(-\psi'+\subeq{\vx\cdot(\vx')+(\nabla_{\vx}\psi-\vx)\cdot(\vx)'}{=\nabla_{\vx}\psi\cdot(\vx)'=0})
    \nabla_{\vx}\chi-\rho(\vx)'+\rho_I(\vx)'\big)\cdot\subeq{\vec n}{=(1,0)} \notag \\
    &= -\rho c^{-2}\chi_n\psi'+(\rho_I-\rho)\xi' \notag \overset{\myeqref{eq:fdeta}}{=} -\rho c^{-2}\chi_n\psi'+\frac{\rho_I-\rho}{v^x_I}\psi' \notag \\
    &= -\rho(\frac{1}{\chi^I_n}+c^{-2}\chi_n)\psi'
\end{alignat*}
It turns out that $\eta'$ does not appear in the final form, so the details
of the coordinate transform do not matter at all!

Altogether, when varying $\nabla\psi$ and $\psi$ at the same time, the shock relations
linearize to
\begin{alignat}{1}
    (g)' &= g_{\vec v}\cdot\nabla_{\vx}\psi'-\rho(\frac{1}{\chi^I_n}+c^{-2}\chi_n)\psi'. \myeqlabel{eq:shockl}
\end{alignat}

\subsection{Kernel}

\begin{proposition}
  \mylabel{prop:my-kernel}%
  For any $X\subset C^2(\Omega)\cap C^1(\overline\Omega)$,
  $$\dim\ker F'(\psi^0) \leq 1.$$
  If $=1$, then it is spanned by a function $\psi'$ that satisfies
  $$\psi'(\vec\xi_B)\neq 0.$$
\end{proposition}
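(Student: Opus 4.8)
The plan is to reduce the entire statement to the single claim that the evaluation map $\psi'\mapsto\psi'(\vec\xi_B)$ is injective on $\ker F'(\psi^0)$: if $\psi'\in\ker F'(\psi^0)$ and $\psi'(\vec\xi_B)=0$, then $\psi'\equiv0$. Granting this, $\dim\ker F'(\psi^0)\le1$ follows at once, and in the case of equality any spanning $\psi'$ must satisfy $\psi'(\vec\xi_B)\neq0$, for otherwise it would vanish. So I assume from now on that $\psi'\in\ker F'(\psi^0)$ with $\psi'(\vec\xi_B)=0$ and show $\psi'\equiv0$. By linearity $-\psi'$ is again such a kernel element, so it suffices to prove that $\psi'$ cannot attain a value $M>0$ as its maximum over $\overline\Omega$; applying this to both $\psi'$ and $-\psi'$ forces $\psi'\equiv0$.

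First I would localize a putative maximum $M>0$. The interior is excluded by the linearized interior equation \myeqref{eq:pde-lin}, which is elliptic with no zeroth-order term, via the maximum principle noted there. The open wall segments $A,B$ are excluded by the Hopf lemma together with the homogeneous Neumann condition \myeqref{eq:wall-lin}: a strict boundary maximum would force a strictly positive outer normal derivative. The open shock $S$ is excluded by a sign computation. Fixing the downstream normal $\vec n$ by $z^n_u>0$ gives $z^n_d=-\vec\xi\cdot\vec n>0$, hence $\vec\xi\cdot\vec n<0$ on $S$; since $\Omega$ lies on the origin side of $S$, its outer normal is $\nu=-\vec n$. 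Decomposing $g_{\vec v}$ from \myeqref{eq:g-vecv} in the frame $\{\nu,\vec t\}$, the transonic bound $z^n_d/c\le L_d<1$ gives $g_{\vec v}\cdot\nu=-\rho(1-c^{-2}\chi_n^2)<0$, while the zeroth-order coefficient in \myeqref{eq:shockl} is $b=\rho(1/\chi^I_n+c^{-2}\chi_n)>0$. At an interior maximum of $\psi'|_S$ the tangential derivative vanishes and the outer normal derivative is $\ge0$, so $g_{\vec v}\cdot\nabla\psi'\le0$, contradicting $g_{\vec v}\cdot\nabla\psi'=b\psi'=bM>0$.

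This leaves the two corners $\vec\xi_A=S\cap A$ and the origin $\vec 0=A\cap B$ (the corner $\vec\xi_B$ is excluded since its value $0$ is below $M$). At $\vec\xi_A$ I would use reflection: for trivial RR the opposite wall is perpendicular to the shock and $\rho$, hence $c$, is constant on $\Omega$, so both the interior operator of \myeqref{eq:pde-lin} and the shock relation \myeqref{eq:shockl} are invariant under reflection across $A$. The even extension of $\psi'$ across $A$ (legitimate by the Neumann condition) then solves the same problem on the doubled domain, in which $S$ and its mirror image join into a smooth arc through $\vec\xi_A$ carrying the same oblique condition; thus $\vec\xi_A$ becomes an interior shock point and the previous argument applies. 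At $\vec 0$ I would freeze coefficients and use the corner expansion: on a Neumann--Neumann wedge of opening $\omega$ the separated solutions are $r^{k\pi/\omega}\cos(k\pi\theta/\omega)$, whose leading non-constant mode changes sign across the wedge, so $\vec 0$ can be a one-sided extremum only if $\psi'$ is constant near $\vec 0$, hence everywhere by unique continuation, contradicting $\psi'(\vec\xi_B)=0<M$. Therefore no positive maximum exists, and the symmetric statement for $-\psi'$ completes the proof that $\psi'\equiv0$.

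The main obstacle is twofold. The delicate point in the shock computation is orientation: one must check that $g_{\vec v}$ points into $\Omega$ while the zeroth-order coefficient keeps the sign $+b$, and this is exactly where transonic ellipticity $z^n_d<c$ enters through \myeqref{eq:g-vecv}. The genuinely technical step is the corner $\vec 0$, which needs the elliptic corner-regularity apparatus to justify the separated expansion and the unique-continuation conclusion; the reflection trick only disposes of the right-angle corner $\vec\xi_A$. I expect the weak-type hypothesis \myeqref{eq:weaktype} (equivalently $g_{\vec v}\cdot\vec\xi>0$ here) to be inessential for this one-sided bound, which rests on transonic ellipticity and the corner structure, and to matter instead for the companion results establishing that the kernel is actually one-dimensional.
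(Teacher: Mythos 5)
Your skeleton is the same as the paper's: you reduce the whole statement to injectivity of the evaluation $\psi'\mapsto\psi'(\vec\xi_B)$ on the kernel (the paper phrases this as: any nonzero kernel element must attain a positive maximum or negative minimum, and the only place not excluded is $\vec\xi_B$; then a linear combination of two kernel elements vanishing there must vanish identically), and your interior, wall and shock exclusions coincide with the paper's, including the decisive sign bookkeeping in \myeqref{eq:shockl} where transonic ellipticity makes the coefficient of $\psi'_n$ positive while the zeroth-order coefficient has the opposite sign, and the orientation check that $\vec n$ is inner for $\Omega$. Your reflection across $A$ at $\vec\xi_A$ is a valid variant but more elaborate than needed: the paper disposes of that corner in one line by observing that since $A$ and $S$ meet at a right angle, the Neumann condition $\psi'_n=0$ on $A$ together with $C^1(\overline\Omega)$ regularity already forces the tangential derivative along $S$ to vanish at $\vec\xi_A$, so the shock-sign argument applies there verbatim. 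Your closing remark that weak-type is inessential for this proposition is also correct --- only transonicity enters here; weak-type matters for the index computation.

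The genuine gap is your treatment of the wall--wall corner $0$. The dichotomy you invoke --- either some mode $r^{k\pi/\omega}\cos(k\pi\phi/\omega)$ has a nonzero coefficient (hence sign change, no one-sided extremum) or $\psi'$ is constant near $0$ --- requires two things you do not supply. First, a Kondratiev-type expansion of $\psi'$ at the corner is not free under the proposition's hypothesis, which assumes only $X\subset C^2(\Omega)\cap C^1(\overline\Omega)$ and claims the conclusion for \emph{any} such $X$. Second, and more seriously, passing from ``all expansion coefficients vanish'' to ``constant near $0$'' is a unique-continuation statement from \emph{infinite-order} flatness at a boundary corner point with Neumann conditions; this does not follow from finitely many expansion terms (an infinitely flat, one-signed remainder is not excluded by the modes), it is not a standard textbook result for an arbitrary opening $\omega$ (repeated even reflection does not close up), and you neither cite nor prove it. The paper avoids all of this machinery with an elementary barrier: assuming a global extremum of $\psi'$ at $0$, it considers $\hat\psi:=\psi'-\psi'(0)+\delta\xi$, which satisfies the same homogeneous interior equation, has $\hat\psi_n=0$ on $A$ and $\hat\psi_n=\delta\,\xi_n>0$ on $B$ (this is exactly where $\theta>90^\circ$ for trivial RR enters), so for small $\delta>0$ the extremum of $\hat\psi$ over a small corner neighbourhood is pushed to $0$ itself, yielding $\psi'_\xi(0)\le-\delta<0$; but the two slip conditions combine with $C^1$ regularity to give $\nabla\psi'(0)=0$ --- contradiction. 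Replacing your expansion/unique-continuation step by this barrier closes the gap and keeps the proof within the regularity actually assumed; your exponent computation is consistent with Proposition \myref{prop:wallwall-beta}, but in this proof it can only serve as a heuristic.
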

\begin{proof}
    Assume the kernel is nontrivial.
    Let $\psi'$ be a nonzero element.

  Consider a positive local maximum (with respect to $\overline\Omega$)
  of $\psi'$ at $S\cup\{\vec\xi_A\}$. 
  A maximum at $S$ requires $\psi'_t=0$; for a maximum in $\vec\xi_A$
  this is already implied by the boundary condition $\psi'_n=0$ on $A$,
  by $C^1$ regularity in the corner, 
  since $A$ and $S$ meet at a right angle.
  The coefficients of $-\psi'_n$
  and $\psi'$ in \myeqref{eq:shockl}, the linearization of the 
  shock condition,
  have opposite sign. 
  Therefore $\psi'>0$ in the maximum point implies
  $-\psi_n>0$ which is incompatible with a local maximum
  ($\vec n$, the downstream normal, is an inner normal for $\Omega$).
  By the same argument a local negative minimum is ruled out.

  This implies in particular that $\psi'$ cannot be constant.

  $\psi'$ satisfies \myeqref{eq:pde-lin}, the linearization of the interior
  PDE, so by the classical strong maximum
  principle $\psi'$ cannot have a local extremum in $\Omega$ 
  unless it is constant.
  By the Hopf lemma, the wall boundary condition \eqref{eq:wall-lin} does not allow a local extremum
  at $A$ or $B$ unless $\psi'$ is constant.

  Assume $\psi'$ has a global maximum in $0$ (wall-wall corner). Let $B_\epsilon(0)$ be the 
  ball with radius $\epsilon$
  centered in $0$ and abbreviate $U:=B_\epsilon(0)\cap\Omega$, $I:=\partial B_\epsilon(0)\cap\Omega$.
  For sufficiently small $\epsilon>0$, $\overline I\subset\Omega\cup A\cup B$, so 
  as shown above $\psi'$ cannot attain a maximum on $\overline I$.
  Therefore $\psi'(0)>\max_{\overline I}\psi'$. 
  
  $\hat\psi:=\psi'-\psi'(0)+\delta\xi$ is a supersolution for $\delta\geq 0$:
  $$(I-c^{-2}\vec\xi^2):\nabla^2\hat\psi=(I-c^{-2}\vec\xi^2):\nabla^2\psi'=0$$
  by linearity, 
  $\hat\psi_n=0$ on $A$ and $\hat\psi_n=(\delta\xi)_n>0$ on\footnote{note $\theta>90^\circ$ for trivial RR} $B$.
  Therefore $\hat\psi$ does not attain extrema in $U$.
  For $\delta>0$ sufficiently small, 
  $$\max_{\overline I}\hat\psi=\max_{\overline I}\psi'-\psi'(0)+\delta\xi>0,$$
  while $\hat\psi(0)-\psi'(0)=0$, so the minimum of $\hat\psi$ over $\overline U$ is attained in $0$. 
  Therefore $\hat\psi^{}_\xi(0)\leq0$,
  hence $\psi'_\xi(0)\leq-\delta<0$. But the boundary conditions $\psi'_n=0$ on $\overline A,\overline B$ 
  combine to $\nabla\psi'(0)=0$ 
  --- contradiction. Hence $\psi'$ cannot have a global maximum in $0$; minima are ruled out analogously.

  Since $\psi'$ is nonzero, it must have a positive maximum or
  negative minimum somewhere. As we have shown that is not possible except 
  in $\vec\xi_R$.

  For any two elements of the kernel, a suitable linear combination 
  is zero in $\vec\xi_R$, hence zero everywhere. Thus the kernel
  cannot have dimension higher than $1$.
\end{proof}

\subsection{Type and Fredholm index}

\begin{proposition}
  \mylabel{prop:weak-gsp}%
  Consider the eigenvalues of the operator pencil for $F'(\psi^0)$ in the reflection corner $\vec\xi_B$ 
  (see Section \myref{section:operatorpencils}). 
  There is an eigenvalue $\lambda_0=\alpha_0+i\beta_0$ of multiplicity $1$ with least nonnegative $\beta_0$, and
  \begin{alignat}{1}
      \beta_0 &\begin{cases} 
          \in(0,1), & \text{if the shock is strong-type in $\vec\xi_B$,} \\
          =1, & \text{for critical-type,} \\
          >1, & \text{for weak-type.}
      \end{cases} \myeqlabel{eq:alpha0}
  \end{alignat}
\end{proposition}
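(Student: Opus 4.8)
The plan is to realize $F'(\psi^0)$ near the reflection corner $\vec\xi_B$ as a constant-coefficient elliptic problem on a plane wedge, compute its operator pencil, and read the trichotomy \myeqref{eq:alpha0} off the sign of $g_{\vec v}\cdot\vec z_d$ appearing in Definition \ref{def:type}. The guiding intuition is that \myeqref{eq:alpha0} is the corner-regularity counterpart of the maximum-principle sign already exploited in Proposition \ref{prop:my-kernel}: the opposite signs of the coefficients of $-\psi'_n$ and $\psi'$ in \myeqref{eq:shockl} are exactly what force the leading exponent above $1$ in the weak-type case.

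First I would freeze all coefficients at $\vec\xi_B$ and keep only the parts that are homogeneous under $r\to 0$. The interior operator becomes the constant-coefficient elliptic operator $(c^2I-\vec\xi_B^2):\nabla^2$ of \myeqref{eq:pde-lin}; the wall side $\hat B$ carries the Neumann condition \myeqref{eq:wall-lin}, $\vec n_B\cdot\nabla\psi'=0$; and the shock side $S$ carries the first-order part $g_{\vec v}\cdot\nabla\psi'=0$ of \myeqref{eq:shockl}, the subordinate zeroth-order term $-\rho(1/\chi^I_n+c^{-2}\chi_n)\psi'$ dropping out of the pencil. I would then apply the linear map normalizing $c^2I-\vec\xi_B^2$ to the identity, turning the interior equation into the Laplacian on a wedge of opening $\omega$ and carrying the two boundary directions into oblique directions $\vec m_{\hat B},\vec m_S$. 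The decisive geometric input is that in trivial RR the velocity in $\Omega$ vanishes, so $\vec z_d=\nabla\chi=-\vec\xi_B$ lies \emph{along} the wall $\hat B$, which is simultaneously the slow eigendirection of $c^2I-\vec\xi_B^2$; hence the normalizing map fixes the wall axis and tilts only the shock side and the two co-normals in a controlled way.

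Next I would compute the pencil by the Mellin ansatz $\psi'=r^{-i\lambda}\Phi(\theta)$, so that $|\psi'|\sim r^{\beta_0}$ with $\beta_0=\operatorname{Im}\lambda$ the regularity exponent. Writing $s=-i\lambda$, harmonicity gives $\Phi(\theta)=a\cos(s\theta)+b\sin(s\theta)$, and the two oblique conditions on the sides $\theta=0$ and $\theta=\omega$ yield a homogeneous $2\times2$ system whose characteristic determinant $D(\lambda)=0$ is the pencil equation. Aside from the trivial constant mode at $\beta_0=0$, the eigenvalue $\lambda_0$ of the statement is the smallest root with $\beta_0>0$; its multiplicity is $1$ because the angular ODE has a two-dimensional solution space of which the constant already accounts for one branch, so that $D$ has a simple smallest positive root, which I would confirm by checking $D'(\lambda_0)\neq0$ there.

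The crux, and the step I expect to be hardest, is tying the sign of $g_{\vec v}\cdot\vec z_d$ to whether this root lies below, at, or above $1$. At criticality $g_{\vec v}\cdot\vec z_d=0$ gives $g_{\vec v}\perp\vec z_d$, i.e.\ $g_{\vec v}\parallel\vec n_B$ since $\vec z_d\parallel\hat B\perp\vec n_B$; using that $g_{\vec v}$ is the \emph{inner} normal of the shock polar (Theorem \ref{th:shockpolar}) I would show the normalized configuration degenerates into a straight Neumann half-plane, forcing the smallest positive exponent to be exactly $\beta_0=1$. I would then prove that $D(\lambda)$ depends monotonically on the single obliqueness parameter controlled by $g_{\vec v}\cdot\vec z_d$, so that flipping the sign to negative (weak-type) pushes the smallest root above $1$ while a positive sign (strong-type) pulls it into $(0,1)$. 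Establishing this monotone crossing through $\beta_0=1$, by differentiating the characteristic determinant at the critical configuration and controlling the possible transition of the root from real to complex (hence $\alpha_0\neq0$), is the heart of the argument and the main obstacle.
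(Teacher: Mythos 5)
Your setup coincides with the paper's proof: freeze the coefficients of $F'(\psi^0)$ at $\vec\xi_B$, drop the zeroth-order term of \myeqref{eq:shockl} from the pencil, and normalize $c^2I-\vec\xi_B^2$ to the identity by a dilation along $B$ --- legitimate precisely because $\vec z_d=-\vec\xi_B\parallel B$, so the map fixes the wall, keeps the wall condition Neumann, and preserves the sign of $\vec n_B\times g_{\vec v}$. The gap is in the final step, and it is the step that \emph{is} the proposition. You set up the characteristic determinant $D(\lambda)$ of the $2\times2$ angular system and then defer the trichotomy to two unproven claims: that the smallest positive root crosses $1$ monotonically in the obliqueness parameter as $g_{\vec v}\cdot\vec z_d$ changes sign, and that one can control roots leaving the real (i.e.\ purely imaginary $\lambda$) axis. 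You flag this yourself as the main obstacle, but no argument is given, so the proof is incomplete exactly where it matters. The deformation machinery is also unnecessary: for two constant-coefficient oblique conditions on a plane wedge the angular problem is explicitly solvable --- the eigenfunctions are $r^\beta\sin(\beta(\phi-\phi_1)-\gamma_1)$ and $D$ reduces to a sine, giving the closed-form arithmetic progression \myeqref{eq:alpha} with every eigenvalue of the form $\lambda=i\beta$ (so $\alpha_0=0$, and your feared real-to-complex transition cannot occur) and multiplicity $1$ for $\beta\neq0$. With $\gamma_2=90^\circ$ on the wall, \myeqref{eq:alpha} yields $\beta_0=(\gamma_1-90^\circ)/(\phi_2-\phi_1)$, and the sign of $g_{\vec v}\cdot\vec z_d$ (preserved by the dilation) places $\gamma_1$ below, at, or above $\phi_2-\phi_1+90^\circ$; the trichotomy in \myeqref{eq:alpha0} then drops out by inspection, which is the paper's one-line conclusion.

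Two smaller inaccuracies in your outline. At criticality the configuration does not ``degenerate into a straight Neumann half-plane'': the wedge keeps its opening $\phi_2-\phi_1<180^\circ$. What actually happens is that the linear function whose gradient is parallel to $B$ satisfies the Neumann condition on $B$ and, since $g_{\vec v}\perp B$ exactly at criticality, also the homogeneous shock condition; hence $\beta=1$ is an eigenvalue with eigenfunction linear in $\vec\xi$. Your conclusion $\beta_0=1$ is right, but the mechanism as described would not survive scrutiny. Likewise your multiplicity argument (``the constant already accounts for one branch, so the smallest positive root of $D$ is simple'') is not a proof --- the constant sits at $\lambda=0$ and says nothing about simplicity at $\lambda_0$; simplicity for $\beta\neq0$ again follows from the explicit eigenfunction family, not from a dimension count.
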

\begin{proof}
    The operator $F'(\psi_0)$, with coefficients frozen in $\vec\xi_B$, 
    consists of the interior operator
    $(I-c^{-2}\vec\xi_B^2):\nabla^2\psi'$ 
    and the boundary operators $\nabla\psi'\cdot\vec n_B$ and $g_{\vec v}\cdot\nabla\psi'$.
    We choose a linear coordinate transform so that the interior operator is mapped into $\Delta\psi'$.
    This transform is a dilation in the $B$ direction. 

    Consider polar coordinates $(r,\phi)$ centered in $\vec\xi_B$.
    Let $\Gamma_2=B$, $\Gamma_1=S$, then the boundary operators take the form \myeqref{eq:beta}
    with $\gamma_2=90^\circ$ (Neumann) and (see Figure \myref{fig:corner})
    \begin{alignat}{1}
        \gamma_1 & \begin{cases}
            \in(90^\circ,\phi_2-\phi_1+90^\circ), & \text{for strong-type,} \\
            =\phi_2-\phi_1+90^\circ, & \text{for critical-type,} \\
            \in(\phi_2-\phi_1+90^\circ,180^\circ), & \text{for weak-type.}
        \end{cases}
    \end{alignat}
    To see this, note that $\nabla\chi^0=\nabla\psi^0-\vec\xi=-\vec\xi\parallel B$ on $B$.
    For a weak-type shock (Definition \myref{def:type}), $g_{\vec v}\cdot\nabla\chi^0<0$,
    so $\vec n_B\times g_{\vec v}>0$. This property is preserved under dilation along $B$,
    so $\gamma_1>\phi_2-\phi_1+90^\circ$ (see Figure \myref{fig:corner} left). The other types are analogous.

    Now \myeqref{eq:alpha} immediately implies \myeqref{eq:alpha0}.
\end{proof}

\begin{figure}
\input{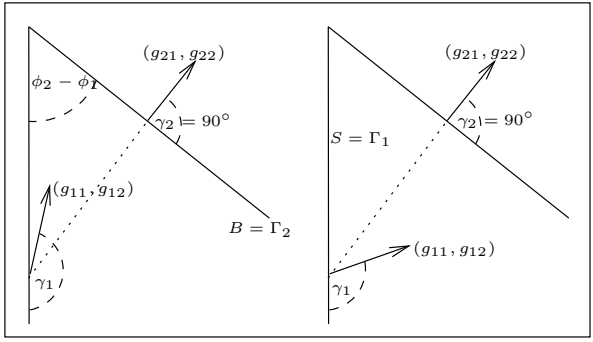}
\caption{Corner limits of the top-order parts $(g_{k1},g_{k2})\cdot\nabla\psi'$ of
the boundary operators. Left: weak-type shock. Right: strong-type shock: solutions need not be $C^1$
in the corner. A critical-type shock has $(g_{11},g_{12})$ exactly perpendicular to $B$.}
\mylabel{fig:corner}
\end{figure}

\begin{proposition}
    \mylabel{prop:wallwall-beta}%
    Consider the eigenvalues $\lambda=\alpha+i\beta$ of the 
    operator pencil of $F'(\psi^0)$ in the $A,B$ and $A,S$ corner.
    The eigenvalue with least nonnegative $\beta$ is $\lambda_0=0$. 
    The eigenvalue with next lowest nonnegative $\beta$ is $\beta_1=1/(1-\theta/180^\circ)>1$
    in the $A,B$ corner and $\beta_1=2$ in the $A,S$ corner; their multiplicity is $1$.
\end{proposition}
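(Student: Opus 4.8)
The plan is to reduce the computation in each corner to a model (frozen-coefficient) problem, exactly as in the proof of Proposition \myref{prop:weak-gsp}. After a linear change of variables normalizing the interior operator \myeqref{eq:pde-lin} to the Laplacian, the pencil eigenvalues are the exponents $\beta$ for which the model equation $\Delta u = 0$ admits a nontrivial separated solution $u = r^{\beta}\Phi(\phi)$ compatible with the two leading-order boundary operators. The constant function always solves this homogeneous problem and supplies the eigenvalue $\lambda_0 = 0$ (with $\beta = 0$); since no separated solution has an exponent strictly between $0$ and the first positive one, it remains only to identify that first positive exponent $\beta_1$ and its multiplicity in each corner.

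For the $A,B$ corner the key simplification is that it sits at the origin, where $\vec\xi = 0$. In $\Omega$ the velocity vanishes, so $\nabla\chi = \nabla\psi - \vec\xi = -\vec\xi$, and the coefficient matrix $c^2I - \vec\xi^2$ of \myeqref{eq:pde-lin} reduces to $c^2I$ at $\vec\xi = 0$. Hence no anisotropic distortion occurs, the interior operator is already a multiple of $\Delta$, and the opening angle is unchanged. Both $A$ and $B$ carry the slip condition, linearizing to the Neumann condition \myeqref{eq:wall-lin}. Separation of variables for $\Delta u = 0$ with homogeneous Neumann data on both sides of a wedge of opening $\omega$ yields $u = r^{k\pi/\omega}\cos(k\pi\phi/\omega)$, i.e.\ exponents $k\pi/\omega$, $k\in\Nnull$, each of multiplicity one for $k\geq1$. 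The opening angle of the convex cone $W$ is $\omega = 180^\circ - \theta$, so the first positive exponent is $\beta_1 = 180^\circ/(180^\circ-\theta) = 1/(1-\theta/180^\circ)$, which exceeds $1$ because $\omega < 180^\circ$.

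For the $A,S$ corner, located at $\vec\xi_A$, the decisive observation is that the oblique shock condition degenerates to a Neumann condition there. Since the velocity in $\Omega$ is zero, $\vec z_d = -\vec\xi_A$, and as $\vec\xi_A\in A$ with $A$ through the origin, $\vec z_d$ is parallel to $A$. But $S$ meets $A$ at a right angle, so the shock tangent $\vec t$ is orthogonal to $\vec\xi_A$ and therefore $z^t = \vec z_d\cdot\vec t = 0$. Inspecting \myeqref{eq:gv}, the tangential coefficient of $g_{\vec v}$ is proportional to $z^t$ and hence vanishes, leaving $g_{\vec v}\parallel\vec n$; the leading-order shock operator $g_{\vec v}\cdot\nabla\psi'$ is thus the normal-derivative (Neumann) operator on $S$. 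It remains to normalize the interior symbol: $c^2I - \vec\xi_A^2$ is isotropic except for a single distinct eigenvalue along $\hat\xi_A = \vec\xi_A/|\vec\xi_A|$ (positive by ellipticity, $|\vec\xi_A|<c$), so the map normalizing it to the identity is a dilation along $\hat\xi_A$. Because $\hat\xi_A$ is parallel to $A$ and $S$ is perpendicular to $A$, this dilation acts along the two axes of the corner and preserves the right angle between $A$ and $S$. We are again left with a pure Neumann--Neumann problem for $\Delta$, now in a wedge of opening $90^\circ$, whose exponents are $2k$; the first positive one is $\beta_1 = 2$, of multiplicity one.

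The only genuinely delicate points are the two reductions to the Neumann case at the $A,S$ corner: the identity $z^t = 0$ forcing $g_{\vec v}\parallel\vec n$, and the fact that the anisotropic normalization of the interior symbol does not perturb the opening angle (which would otherwise spoil the clean value $\beta_1 = 2$). Both hinge on the geometric coincidence $S\perp A$ together with the vanishing of the velocity in $\Omega$; once these are in place, the eigenvalue count is the standard separation-of-variables computation, and the values $\lambda_0 = 0$ and the stated multiplicities follow immediately.
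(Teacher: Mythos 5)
Your proposal is correct and follows essentially the same route as the paper: freeze the coefficients in each corner, observe that both boundary operators reduce to Neumann conditions (on $A,B$ trivially; on $S$ because $\vec z_d=-\vec\xi_A\parallel A\perp S$ forces $z^t=0$ in \myeqref{eq:gv}), normalize the interior symbol to $\Delta$ by a dilation that preserves the corner angle, and read off the exponents $\beta_\ell$ --- your direct separation of variables is exactly the computation behind the paper's formula \myeqref{eq:alpha}. If anything, you are more explicit than the paper's terse citation of \myeqref{eq:shockl} at the $A,S$ corner, spelling out both the degeneration of the oblique shock operator to $\partial_n$ and the fact that the dilation along $\hat\xi_A$ leaves the right angle intact.
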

\begin{proof}   
    In the $A,B$ corner the interior operator is $\Delta$, with Neumann boundary operators $\partial_n$
    (so $\gamma_1=\gamma_2=90^\circ$ in the notation of Section \myref{section:operatorpencils}),
    so the calculation is straightforward. Take $\Gamma_1:=B$, $\Gamma_2:=A$. 
    $\phi_1=\theta$, $\phi_2=\pi$, then by \myref{eq:alpha}
    $$\beta_0=0, \qquad \beta_1 = \frac{\pi}{\pi-\theta}.$$

    In the $A,S$ corner the slip condition on $A$ yields $\chi_\eta=0$, so the
    interior operator is $(1-\chi_\xi^2)\partial_{\xi\xi}+\partial_{\eta\eta}$ which
    becomes $\Delta$ by dilation. 
    Moreover by \eqref{eq:shockl} both boundary operators are $\partial_n$, which are not changed by dilation.
    Hence $\beta_1=2$ by \myeqref{eq:alpha}.
\end{proof}

\begin{figure}
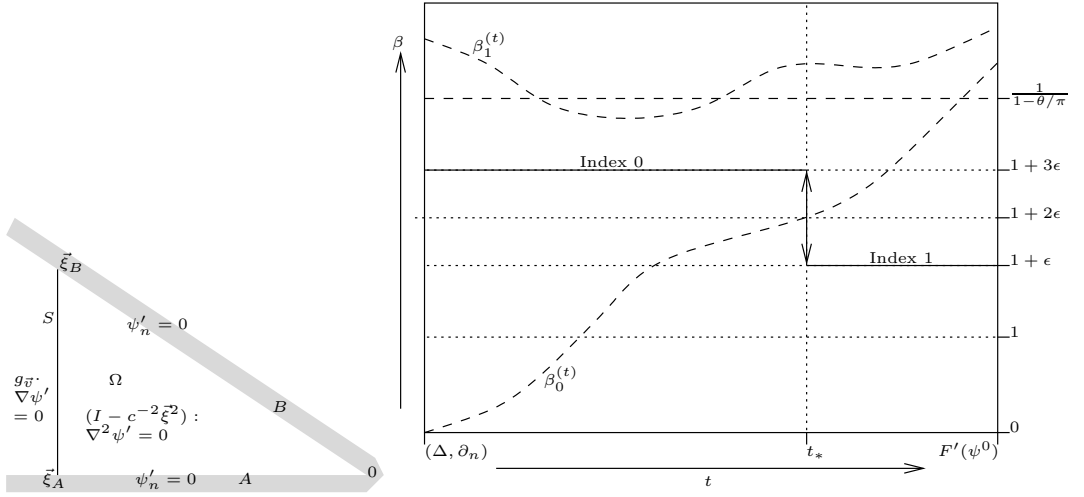

\input{elliptic.pstex_t}
\input{index.pstex_t}
\caption{Left: notation and linearized operator.
    Right: each point represents an operator $T^{(t)}:X^s_\beta\rightarrow Y^s_\beta$ ($t$ horizontal axis, $\beta$
    vertical axis). The operators are Fredholm except on the dashed curves, where some corner operator pencils
    has an eigenvalue. Between the curve the index is constant. Across the curves the index jumps by 
    the eigenvalue multiplicity.}
\mylabel{fig:index}
\end{figure}

\begin{proposition}
  \mylabel{prop:my-index}%
  Consider a weak-type trivial reflection. 
  Let $s\in(2,3)$. 
  For $\epsilon>0$ sufficiently small, the Fredholm index of $F'(\psi^0)$ 
  as a map from $X^s_{1+\epsilon}$ to $Y^s_{1+\epsilon}$ (defined in Section \myref{section:operatorpencils})
  is $1$. 
\end{proposition}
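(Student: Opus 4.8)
The plan is to determine the index by the standard continuation argument for elliptic boundary value problems in corner domains, as set up in Section \myref{section:operatorpencils} and \pmc{Section~4.14}. I regard $F'(\psi^0)$ as one member of a family $T^{(t)}:X^s_\beta\to Y^s_\beta$ obtained by continuously deforming the operator: each $T^{(t)}$ is Fredholm as long as $\beta$ is not the imaginary part of an eigenvalue of any of the three corner operator pencils, and on each connected component of the Fredholm region the index is constant, jumping by the eigenvalue multiplicity whenever $(t,\beta)$ crosses a curve on which some corner pencil acquires an eigenvalue at height $\beta$ (Figure \myref{fig:index}). I fix the weight at $\beta=1+\epsilon$ and connect $F'(\psi^0)$ to the model operator $(\Delta,\partial_n)$ (Laplacian with Neumann data on $A$, $B$, $S$) by such a path. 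Along the path the interior operator $(c^2I-\nabla\chi^2):\nabla^2$ is connected to $\Delta$, the principal part $g_{\vec v}\cdot\nabla$ of the shock condition to the normal derivative $\partial_n$, and the zeroth-order shock coefficient in \myeqref{eq:shockl} is switched on or off, while the wall conditions stay Neumann throughout.

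First I would localize the crossings to the reflection corner. Along the entire path the $A,B$ and $A,S$ corners remain Neumann--Neumann, so by Proposition \myref{prop:wallwall-beta} their pencils keep the eigenvalue $\beta=0$ and have no further eigenvalue below $\pi/(\pi-\theta)>2$ and $2$, respectively. Hence for $\epsilon$ small no eigenvalue of these two corners ever meets the line $\beta=1+\epsilon$, and every index jump along the path comes from the reflection corner $\vec\xi_B$.

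Next I would track the reflection-corner eigenvalue. At the model end ($t=0$) the shock operator is a pure normal derivative with no zeroth-order term, so constants solve the frozen corner problem and the lowest nonnegative eigenvalue of the $B,S$ pencil sits at $\beta=0$, the next one lying above the opening-angle value $\pi/(\phi_2-\phi_1)>2$ exactly as in the other corners. As $t$ moves toward $F'(\psi^0)$ this lowest eigenvalue varies continuously, and by Proposition \myref{prop:weak-gsp} it ends at $\beta_0>1$ \emph{because the shock is weak-type}; its multiplicity is $1$. Choosing $0<\epsilon<\min\{1,\beta_0-1\}$, the eigenvalue therefore crosses the weight line $\beta=1+\epsilon$ exactly once and transversally. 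Since the model $(\Delta,\partial_n)$ has index $0$ at weight $1+\epsilon$ — this weight lies in the strip between the constant eigenvalue $\beta=0$ and the first positive corner eigenvalue $2$, and the index there is computed directly as in \pmc{Proposition~\pmref{prop:unperturbed-index}} — the single upward crossing raises the index by $1$, giving $\operatorname{ind}F'(\psi^0)=1$.

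The main obstacle is the bookkeeping of this one crossing: from the eigenvalue relation \myeqref{eq:alpha} and the continuity (indeed monotonicity along a suitable path) of the lowest $B,S$-eigenvalue one must verify both that the crossing is simple and transversal in the increasing direction and that the model index is $0$. This is precisely where the weak-type hypothesis is indispensable: for a strong-type shock Proposition \myref{prop:weak-gsp} gives $\beta_0\in(0,1)$, the lowest eigenvalue never reaches $1+\epsilon$, no crossing occurs, and the index would remain $0$. Thus the jump to index $1$ is exactly the analytic signature of weak-type reflection, equivalently of gradient decay at the reflection point.
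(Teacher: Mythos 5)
Your overall skeleton is the paper's: connect $F'(\psi^0)$ to a Neumann model by a continuous family $T^{(t)}$, freeze the $A,B$ and $A,S$ pencils (Proposition \myref{prop:wallwall-beta}), let the reflection-corner eigenvalue rise monotonically from $0$ at the model end to $\beta_0>1$ at $F'(\psi^0)$ (Proposition \myref{prop:weak-gsp}), and account for a single multiplicity-one crossing. But there is one genuine gap in your execution: you fix the weight at $1+\epsilon$ and let the eigenvalue curve cross the line $\beta=1+\epsilon$ \emph{horizontally} in $t$. At the crossing parameter $t_c$ the operator $T^{(t_c)}:X^s_{1+\epsilon}\to Y^s_{1+\epsilon}$ is not Fredholm, so homotopy invariance of the index fails exactly there, and ``the single upward crossing raises the index by $1$'' is the statement that needs proof, not a citation: the Maz'ya--Plamenevskii jump theorem (their Theorem 6.4) compares one \emph{fixed} operator at two different weights, not a moving family at one weight. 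The paper's proof is built to avoid precisely this point: it runs the deformation on $[0,t_*]$ at the \emph{higher} weight $1+3\epsilon$ (choosing $\beta^{(t)}_0$ strictly increasing with $\beta^{(t_*)}_0=1+2\epsilon$, so no crossing occurs there), then at the fixed operator $T^{(t_*)}$ jumps vertically from weight $1+3\epsilon$ to $1+\epsilon$, using Maz'ya--Plamenevskii Theorem 6.4 together with the abstract two-space index-jump lemma (Proposition \myref{prop:index-pm}, proved in the paper for exactly this purpose) to get an index increment of $1$, and finally continues on $[t_*,1]$ at weight $1+\epsilon$, where $\beta^{(t)}_0\geq 1+2\epsilon>1+\epsilon$. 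Your argument becomes complete if you replace the horizontal crossing by this L-shaped detour in the $(t,\beta)$-plane; the two-dimensional ``index jumps across the curves'' picture in Figure \myref{fig:index} is the \emph{conclusion} of that maneuver, not an available hypothesis.

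A second, minor difference: your base point is the pure Neumann Laplacian $(\Delta,\partial_n)$, with index $0$ outsourced to \pmc{Proposition \pmref{prop:unperturbed-index}}. That claim is true but requires its own computation, since $(\Delta,\partial_n)$ is not an isomorphism (constants lie in the kernel and there is one solvability condition, so one must check $\dim\ker=\operatorname{codim}\ran=1$ in the weighted scale). The paper instead takes $\Delta-I$ with Neumann data and quotes Lieberman's isomorphism theorem, obtaining index $0$ with no computation, plus a short potential-theory remark to reconcile weights measured from $\Sigma$ versus $\partial\Omega$. Either base point works, but you should either supply the kernel/cokernel count for pure Neumann in $X^s_{1+\epsilon}$ or add the zeroth-order term as the paper does.
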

\begin{proof}
    Consider the operator $\Delta-I$ on the triangle $\Omega_0$ with Neumann boundary operators
    $\partial_n$ on $A,B,S$.
    \cite[Theorem 1.4]{lieberman-crelle-1988} yields 
    $s\in(2,3)$ and $\epsilon\in(0,s-2)$ so that the operator 
    is a linear isomorphism on $X^s_{1+3\epsilon}$ onto $Y^s_{1+3\epsilon}$. 
    The space $X^s_{1+3\epsilon}$ defined in the present paper corresponds to $H^{(-1-3\epsilon)}_s$ 
    in his notation, \emph{except} that his weights are with respect to $\partial\Omega$, not $\Sigma$.
    But $\partial\Omega-\Sigma$ consists of line segments, 
    so classical potential theory\footnote{moreover $s$ determines
        only regularity away from the corners, so it can be improved to any $s>2$}
    \cite[Lemma 6.27]{gilbarg-trudinger} extends his result to our case.

    Now we choose a 
    family of operators $t\mapsto T^{(t)}$ so that $T^{(0)}=(\Delta-I,\partial_n,\partial_n,\partial_n)$
    and $T^{(1)}=F'(\psi^0)$ (see Figure \myref{fig:index}).
    The family is chosen continuous in $t$ with respect to all operator norms we consider,
    which is easily achieved by choosing a continuous family of coefficients for interior and boundary operators.
    We choose the family so that $\beta_0,\beta_1$ in the $A,B$ and $A,S$ corners are constant in $t$ 
    (Proposition \myref{prop:wallwall-beta}). 
    If $\beta^{(t)}_j$ ($j=0,1$) are the two lowest nonnegative imaginary parts of eigenvalues of the 
    operator pencils in the $\vec\xi_B$ corner,
    then $t\mapsto\beta^{(t)}_j$ are continuous as well. $\beta^{(0)}_0=0$ whereas
    Proposition \myref{prop:weak-gsp} shows that $\beta^{(1)}_0>1$. 
    By choosing suitable coefficient families in the reflection corner we can make 
    $\beta^{(t)}_0$ strictly increasing in $t$. Moreover $\beta^{(0)}_1,\beta^{(1)}_1>1$, so we can achieve 
    $\beta^{(t)}_1>1$.
    
    \cite[Theorem 6.3]{mazya-plamenevskii} yields\footnote{
      Their weighted H\"older spaces are homogeneous; 
      for our inhomogeneous spaces, 
      $\oplus\Pi_1$ (set of polynomials of degree $\leq 1$) is added for each corner,
      which is only a finite-dimensional change.} 
    that $T^{(t)}:X^s_{1+\epsilon}\rightarrow Y^s_{1+\epsilon}$ 
    is a Fredholm operator if $1+\epsilon$ is not the imaginary 
    part of an operator pencil eigenvalue in any corner. 

    Choose $\epsilon\in(0,1)$ sufficiently small (not larger than above)
    so that $1+3\epsilon<\beta^{(t)}_1,1/(1-\theta/180^\circ)$ for all $t$ (see Figure \myref{fig:index} right).
    Let $t_*\in(0,1)$ be such that $\beta^{(t_*)}_0=1+2\epsilon$. 
    Then $t\in[0,t_*]\mapsto T^{(t)}:X^s_{1+3\epsilon}\rightarrow Y^s_{1+3\epsilon}$ 
    and $t\in[t_*,1]\mapsto T^{(t)}:X^s_{1+\epsilon}\rightarrow Y^s_{1+\epsilon}$ are both continuous families
    of Fredholm operators. By Fredholm theory the index of each family is constant.
    $T^{(0)}:X^s_{1+\epsilon}\rightarrow Y^s_{1+\epsilon}$ is an isomorphism, as shown above, i.e.\ has index $0$.

    The interval $[1+\epsilon,1+3\epsilon]$ contains only one eigenvalue of a corner operator pencil of $T^{t_*}$,
    namely $\beta_1^{(t_*)}=1+2\epsilon$; its multiplicity is $1$. 
    Hence \cite[Theorem 6.4]{mazya-plamenevskii} shows that $\dim(Z/X_+)=1$ in Proposition 
    \myref{prop:index-pm},
    where we choose $X_-=X^s_{1+3\epsilon}$, $Y_-=Y^s_{1+3\epsilon}$, $X_+=X^s_{1+\epsilon}$, $Y_+=Y^s_{1+\epsilon}$
    and $A_\pm=T^{(t_*)}$. Therefore $\find A_--\find A_+=1$, so
    $T^{(1)}:X^s_{1+\epsilon}\rightarrow Y^s_{1+\epsilon}$ has index $1$.
\end{proof}

\begin{remark}
    The proof requires $\beta_0>1$, which is not satisfied for critical-
    or strong-type shocks in the reflection corner. This is the crucial difference 
    to weak-type shocks. Note that the value of $\beta_0$ is a purely local property;
    the chosen far-field perturbation is not significant.
\end{remark}

\subsection{Perturbation}

\begin{theorem}
    \mylabel{th:sonic-wrong}%
    Consider a weak-type trivial transonic RR, with parameters $\vec p_0=(M_1,\theta,\alpha)$.
    There is a ball $U$ of radius $r>0$ around $\vec p_0$ so that there is another global weak-type RR
    for any $\vec p\in U$.
\end{theorem}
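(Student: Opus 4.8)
The plan is to solve the operator equation $F(\psi,\vec p)=0$ for $\psi$ near the trivial solution $\psi^0$, for every parameter $\vec p=(M_1,\theta,\alpha)$ near $\vec p_0$, by the implicit function theorem. First I would make the parameter dependence explicit and benign: using the coordinate transform of Section \ref{section:linearization}, which pulls the free-boundary problem back to a \emph{fixed} reference domain $\Omega$ with fixed Banach spaces $X=X^s_{1+\epsilon}$, $Y=Y^s_{1+\epsilon}$, the map becomes $F:X\times P\to Y$ with $P$ a neighborhood of $\vec p_0$ in parameter space, and $F$ is jointly $C^1$ because the domain geometry (through $\theta$) and the upstream data (through $M_1,\alpha$) enter smoothly into the coefficients of the interior operator, the wall conditions, and the shock functional \eqref{eq:g}. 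By construction $F(\psi^0,\vec p_0)=0$.

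The core observation is that the partial Fr\'echet derivative $F'(\psi^0):=F_\psi(\psi^0,\vec p_0)$ is \emph{surjective with one-dimensional kernel}. Indeed Proposition \ref{prop:my-index} gives Fredholm index $1$ in the weak-type case, while Proposition \ref{prop:my-kernel} gives $\dim\ker F'(\psi^0)\leq 1$; since $\operatorname{ind}=\dim\ker-\dim\operatorname{coker}=1$ forces $\dim\ker\geq 1$, we obtain $\dim\ker=1$ and $\dim\operatorname{coker}=0$. Let $\psi_*$ span the kernel. By Proposition \ref{prop:my-kernel} we have $\psi_*(\vec\xi_B)\neq 0$, so the bounded point-evaluation functional $\ell(\psi):=\psi(\vec\xi_B)$ is nondegenerate on $\ker F'(\psi^0)$; this is exactly the handle I would use to remove the one-dimensional degeneracy. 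Define the augmented map $\hat F:X\times P\to Y\times\R$ by $\hat F(\psi,\vec p):=\big(F(\psi,\vec p),\,\ell(\psi)-\ell(\psi^0)\big)$. Its partial derivative $\hat F_\psi(\psi^0,\vec p_0)=(F'(\psi^0),\ell)$ is injective (any kernel element lies in $\operatorname{span}\psi_*$ and is annihilated by $\ell$ only if it is $0$) and surjective (surjectivity of $F'(\psi^0)$ handles the $Y$-component, and adding a multiple of $\psi_*$ adjusts the $\R$-component without affecting it), hence an isomorphism.

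The implicit function theorem then yields $r>0$ and a $C^1$ map $\vec p\mapsto\psi(\vec p)\in X$ on the ball $U=B_r(\vec p_0)$ with $\psi(\vec p_0)=\psi^0$ and $\hat F(\psi(\vec p),\vec p)=0$; in particular $F(\psi(\vec p),\vec p)=0$ for all $\vec p\in U$. The normalization $\ell(\psi)=\ell(\psi^0)$ merely selects one representative from the one-parameter family of solutions at each $\vec p$, which is harmless since only existence is claimed.

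It remains to check that $\psi(\vec p)$ is a genuine \emph{global weak-type transonic} RR, and this is where I expect the only real friction. Three points must be verified, all by continuity and openness. First, regularity: because the construction takes place in $X^s_{1+\epsilon}$ and, for weak-type shocks, the leading corner exponent satisfies $\beta_0>1$ in $\vec\xi_B$ (Proposition \ref{prop:weak-gsp}), the solution is $C^1$ up to the reflection corner, so the gradient and hence the physical state are continuous there; this is the mechanism behind the decay statement of the introduction and is exactly what fails for strong- and critical-type shocks. Second, the defining inequalities of the configuration --- admissibility $v^n_u\geq v^n_d$, downstream ellipticity $L_d<1$, and the weak-type sign condition $g_{\vec v}\cdot\vec z_d<0$ of Definition \ref{def:type} --- are all \emph{strict} at $(\psi^0,\vec p_0)$ and depend continuously on $(\psi,\vec p)$, so after shrinking $r$ they persist and the perturbed RR is again weak-type and transonic. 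Third, global extension: since the exterior of $\Omega$ consists of constant states joined to $\Omega$ across the perturbed reflected shock and the walls, and the functional \eqref{eq:g} already encodes matching to the fixed upstream $\vec v_I$, the local solution extends to all of $W$ after a final shrinking of $r$ keeping the shock admissible out to the far field. Thus the main obstacle is not the abstract solvability --- an immediate consequence of index $1$ together with $\dim\ker\leq 1$ --- but the a posteriori verification that the weighted-space solution is a bona fide global reflection, which rests entirely on the weak-type corner regularity $\beta_0>1$.
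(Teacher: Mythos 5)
Your proposal is correct and follows essentially the same route as the paper: combine the index-$1$ result (Proposition \ref{prop:my-index}) with $\dim\ker\leq 1$ and $\psi_*(\vec\xi_B)\neq 0$ (Proposition \ref{prop:my-kernel}) to get surjectivity with one-dimensional kernel, apply the implicit function theorem with one free parameter, and extend to a global RR by attaching the incident shock and hyperbolic regions. Your augmentation of $F$ by the point-evaluation functional $\ell(\psi)=\psi(\vec\xi_B)$ is just an explicit formalization of what the paper states as ``we can use $\psi(\vec\xi_B)$ as free parameter,'' so the two arguments coincide in substance.
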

\begin{proof}
    Since the downstream state of a shock depends smoothly on the shock normal, location and upstream state,
    the shock polar varies smoothly with $M_u$. 
    Necessarily $M_2>1$, so the incident shock is weak-type like the reflected shock.
    Therefore, sufficiently small perturbations $\vec p$ yield a new local RR which is close
    to the old one.
    In particular the perturbation of the reflection point is small and the reflected shock is still
    weak-type.

    By Proposition \myref{prop:my-index}, the Fredholm index of $F'(\psi^0)$ is $1$. 
    By Proposition \myref{prop:my-kernel} the kernel
    has dimension $1$, so the codimension of the range is $0$. Therefore we can apply the implicit
    function theorem, with a single real free parameter. 
    By Proposition \myref{prop:my-kernel} we can use $\psi(\vx_B)$ as free parameter,
    which corresponds to changing the reflection point.

    Therefore we obtain a new elliptic region for sufficiently small perturbations of the reflection point,
    while satisfying both shock conditions.
    After extending the solution to the entire domain by adding incident shock and hyperbolic regions, 
    we have obtained a global transonic weak-type RR. 
\end{proof}

\section{Corner domains}

Here we adapt some literature results to our case.
For details, see \cite{grisvard}, \cite{mazya-plamenevskii}, \cite{mazya-nazarov-plamenevskii-volume-one}, 
and \cite{nazarov-plamenevskii}.

\subsection{Weighted H\"older spaces}

Consider a bounded open simply connected Lipschitz domain $\Omega\subset\R^2$.
Let $\Gamma_k$ ($k=1,\dotsc,m$) be pairwise disjoint line segments with (excluded) endpoints $y_{k-1},y_k$
(set $y_0:=y_m,\Gamma_0:=\Gamma_m$ for simplicity). Set $\Sigma:=\{y_1,\dotsc,y_m\}$.
Assume $\partial\Omega=\bigcup_{k=1}^m\overline\Gamma_k$.
Let $\Gamma_1,\dotsc,\Gamma_m$ pass around $\Omega$ clockwise, so that $\Omega$ 
lies counterclockwise from $\Gamma_k$ to $\Gamma_{k+1}$
near each corner $y_k$.

\begin{definition}
    \label{def:weighted}%
    Let $\beta\in\R$, $s\in(0,\infty)-\Z$. 
    Abbreviate $\overline\Omega_r:=\overline\Omega-B_r(\Sigma)$ where $B_r$ is the $r$-neighbourhood.
    For $u\in C^s(\overline\Omega,\Sigma)$ we define \defm{weighted H\"older norms}
    \begin{alignat}{1}
        \|u\|_{C^s_\beta(\overline\Omega,\Sigma)}:= \limsup_{r\downarrow 0}r^{s-\beta}\|u\|_{C^s(\overline\Omega_r)}.
    \end{alignat}
    Then $C^s_\beta(\overline\Omega,\Sigma)$ is the set of $u$ with finite norm. The definitions for $\Gamma_j$ in place of $\Omega$
    are analogous.
\end{definition}

Non-integer $\beta$ corresponds to the lowest exponent of $r^\beta$ behaviour allowed in a corner;
note that $C^s_\beta(\overline\Omega,\Sigma)\subset C^\beta(\overline\Omega)$.

\subsection{Operator pencils}
\mylabel{section:operatorpencils}

Consider the operator of a linear second-order elliptic boundary value problem:
\begin{alignat}{1}
  L(x)u &:= \sum_{i,j=1}^2a_{ij}(x)\frac{\partial^2}{\partial x_i\partial x_j}u+\sum_{i=1}^2b_i(x)\frac{\partial}{\partial x_i}u+c(x)u 
  \qquad\text{in $\Omega$,} \\
  B_k(x)u &:= \sum_{i=1}^2g_{ki}(x)\frac{\partial}{\partial x_i}u+h_k(x)u\qquad\text{on $\Gamma_k$, $k=1,\dotsc,m$.}
\end{alignat}
We assume that the coefficients $a_{ij},b_i,c$ are smooth on $\overline\Omega$
and $g_{ki},h_k$ smooth on $\overline\Gamma_k$. 
We write $B=(B_1,\dotsc,B_m)$. Let $s\in(2,\infty)-\Z$; we use the convenient abbreviations
$$
X^s_\beta:=C^s_\beta(\overline\Omega,\Sigma), \qquad 
Y^s_\beta:=C^{s-2}_{\beta-2}(\overline\Omega,\Sigma)\times\prod_{k=1}^mC^{s-1}_{\beta-1}(\overline\Gamma_i,\Sigma).
$$
$(L,B):X^s_\beta\rightarrow Y^s_\beta$ is a continuous linear operator.

Whenever $L(y_1)$ is elliptic, we can find a linear invertible coordinate transformation so that the 
leading-order part of $L(y_1)$ transforms to the Laplace operator $\Delta$. 
In this new frame we consider polar coordinates $(r,\phi)$ centered in $y_1$.
Let $\phi_1,\phi_2$ correspond to $\Gamma_1,\Gamma_2$; 
we normalize $\phi_1\in[0,360^\circ)$ and $\phi_2\in[\phi_1,\phi_1+360^\circ)$.
The coordinate transformation from $(x,y)$ to $(t,\phi)$ with $t=\log r$ is conformal, 
hence preserves the Laplace operator, mapping the cone 
$\{(r,\phi):r>0,~\phi\in(\phi_1,\phi_2)\}$ to an infinite strip $\R\times(\phi_1,\phi_2)$. The 
leading-order parts of $B_k(y_1)$ are
\begin{alignat}{1}
    \frac{\partial u}{\partial t}\cos\gamma_k+\frac{\partial u}{\partial\phi} \sin\gamma_k
    \myeqlabel{eq:beta}
\end{alignat}
Here $\gamma_k$ is the counterclockwise angle from $\Gamma_k$ to the coefficient vector $(g_{k1},g_{k2})$ 
on the corresponding boundary (see Figure \myref{fig:corner}).
We normalize $\gamma_1\in[0,180^\circ)$ and $\gamma_2\in(\gamma_1-180^\circ,\gamma_1]$.

Apply the Fourier-Mellin transform in $t$ to the homogeneous corner equation 
$$-\Delta_{(t,\phi)}u=(-i\partial_t)^2+(-i\partial_\phi)^2=0$$
to obtain the \defm{operator pencil} equation
$$(-i\partial_\phi)^2\tilde u+\lambda^2\tilde u=0$$
where $\lambda=\alpha+i\beta$ are the eigenvalues. The eigenfunctions yield well-known harmonic functions
$$u(t,\phi)=\exp(\beta t)\sin(\beta\phi-\delta)=r^\beta\sin(\beta\phi-\delta).$$
Imposing homogeneous boundary conditions restricts this family to
$$u(r,\phi)=r^{\beta_\ell}\sin(\beta_\ell(\phi-\phi_1)-\gamma_1)$$
with
\begin{alignat}{1}
    \beta_0=-\frac{\gamma_2-\gamma_1}{\phi_2-\phi_1},\qquad\beta_\ell=\beta_0+\frac{\pi}{\phi_2-\phi_1}\ell\qquad(\ell\in\Z). \myeqlabel{eq:alpha}
\end{alignat}
The \defm{multiplicity} of each eigenfunction is $1$, except when $\beta=0$ where it is $2$ 
(for example in the case of two Neumann conditions there is another eigenfunction $u=t=\log r$).

\subsection{Fredholm index jump}

\begin{proposition}
    \mylabel{prop:index-pm}%
    Consider Banach spaces $X_+\subset X_-$ and $Y_+\subset Y_-$ and 
    Fredholm operators $A_\pm:X_\pm\rightarrow Y_\pm$.
    Let $Z:=\{u\in X_-:Au\in Y_+\}$.
    If
    $$d:=\dim(Z/X_+)<\infty,$$
    then
    $$\find A_--\find A_+=d.$$
\end{proposition}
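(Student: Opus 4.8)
The plan is to treat $A_-$ and $A_+$ as restrictions of a single map $A$ (with $A(X_+)\subset Y_+$, which holds because $X_+\subset Z$ by definition of $Z$) and to compare their kernels and cokernels through one application of the snake lemma. First I would form the commutative diagram with exact rows
$$
\begin{array}{ccccccccc}
0 & \to & X_+ & \to & X_- & \to & X_-/X_+ & \to & 0 \\
  &     & \downarrow A_+ & & \downarrow A_- & & \downarrow \bar A & & \\
0 & \to & Y_+ & \to & Y_- & \to & Y_-/Y_+ & \to & 0,
\end{array}
$$
where $\bar A([u]):=[Au]$ is well defined precisely because $A$ maps $X_+$ into $Y_+$. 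The snake lemma (which is purely algebraic, so no topology is needed for it) then produces the exact sequence
$$
0 \to \ker A_+ \to \ker A_- \to \ker\bar A \to \operatorname{coker} A_+ \to \operatorname{coker} A_- \to \operatorname{coker}\bar A \to 0.
$$

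Second, I would identify the two new terms. From the definition of $\bar A$ one has $\ker\bar A=\{[u]\in X_-/X_+: Au\in Y_+\}=Z/X_+$, whose dimension is $d$ by hypothesis. Likewise $\operatorname{coker}\bar A=(Y_-/Y_+)/\big((\ran A_-+Y_+)/Y_+\big)=Y_-/(\ran A_-+Y_+)$, which is a quotient of the finite-dimensional space $\operatorname{coker}A_-=Y_-/\ran A_-$ and is therefore finite-dimensional. Since $A_\pm$ are Fredholm, all six terms of the sequence are finite-dimensional, so the alternating sum of their dimensions vanishes. Writing $e:=\dim Y_-/(\ran A_-+Y_+)$ and collecting terms yields
$$
\find A_- - \find A_+ = d - e.
$$

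The crux, and the step I expect to be the main obstacle, is to show $e=0$, i.e.\ $\ran A_-+Y_+=Y_-$. This is where the ambient structure enters: $Y_+$ is a \emph{dense} subspace of $Y_-$ (faster corner decay being dense in the larger weighted H\"older space), while $\ran A_-$ is closed of finite codimension because $A_-$ is Fredholm. Passing to the finite-dimensional quotient $Y_-/\ran A_-$, the image of the dense subspace $Y_+$ is dense and hence equals all of $Y_-/\ran A_-$; therefore $\ran A_-+Y_+=Y_-$ and $e=0$. This density is genuinely necessary and not merely a convenience: for $A_-:\R\to\R^2$, $t\mapsto(t,0)$, with $Y_+=\R\times\{0\}$ one gets $d=0$ but $e=1$, so the conclusion fails absent a dense inclusion $Y_+\subset Y_-$. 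Substituting $e=0$ into the displayed identity gives $\find A_- - \find A_+ = d$, as claimed.
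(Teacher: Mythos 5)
Your snake-lemma computation is correct, and it is a genuinely different route from the paper's. The paper argues by hand with dual functionals: it chooses a basis $\psi_1,\dotsc,\psi_r$ of $\ker A_+^*$ extending a basis of $\ker A_-^*$, picks a biorthogonal system $w_1,\dotsc,w_r\in Y_+$, solves $Au_j=w_j$ for $j\le m$, and shows $u_1,\dotsc,u_m$ span a complement of $\ker A_-$ in $Z$ modulo $X_+$, whence $d=\dim(\ker A_-/X_+)+m$ and the index identity follows by counting. Your exact sequence packages that same count, but with two advantages. First, it makes visible the correction term $e=\dim\bigl(Y_-/(\ran A_-+Y_+)\bigr)$ that the paper never mentions: the sharp conclusion is $\find A_--\find A_+=d-e$. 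Second, your counterexample is valid --- with $A_-:\R\to\R^2$, $t\mapsto(t,0)$, $Y_+=\R\times\{0\}$ one has $d=0$ but $\find A_--\find A_+=-1$ --- so the proposition as printed is false without a further hypothesis. The paper's own proof smuggles in exactly this hypothesis at its first step: writing $\ker A_-^*\subset\ker A_+^*$, and later treating $\psi_{m+1},\dotsc,\psi_r$ simultaneously as functionals on $Y_+$ and on $Y_-$, presumes that restriction $Y_-^*\to Y_+^*$ is injective on $(\ran A_-)^\perp$; a nonzero functional killed by this restriction is precisely one annihilating $\ran A_-+Y_+$, so (since $\ran A_-$ is closed of finite codimension) the presumption is equivalent to your $e=0$. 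You have therefore isolated an assumption the paper uses implicitly.

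The one flaw in your write-up is the parenthetical justification of $e=0$: density of $Y_+$ in $Y_-$ would indeed suffice, but it is false for the spaces actually used here. The $Y^s_\beta$ are H\"older-type spaces, and H\"older scales are never dense in one another: a datum behaving exactly like $r^{\beta-2}$ at a corner keeps distance bounded below from the subspace of data with faster corner decay, just as $C^{0,\alpha'}$ is not dense in $C^{0,\alpha}$ for $\alpha'>\alpha$. What you actually need, and should verify directly, is only the weaker statement that the composite $Y_+\hookrightarrow Y_-\to Y_-/\ran A_-$ is surjective. For these elliptic corner problems that holds for a different reason: $Y_+$ contains all smooth data supported away from the corner set $\Sigma$, and a functional annihilating $\ran A_-$ is represented by a solution of the adjoint problem which is a function smooth away from $\Sigma$, hence cannot annihilate all such data unless it vanishes. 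With the density claim replaced by this (or any other) direct verification of $\ran A_-+Y_+=Y_-$, your proof is complete, and --- unlike the paper's --- it records the exact hypothesis under which the index jump equals $d$.
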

\begin{proof}
    $\ran A_+\subset\ran A_-$, so $\ker A_-^*=(\ran A_-)^\perp\subset(\ran A_+)^\perp=\ker A_+^*$;
    both spaces are finite-dimensional by Fredholmness of $A_\pm$.
    Choose a basis $\psi_1,\dotsc,\psi_r$ for $\ker A_+^*\subset Y_+^*$ so that $\psi_{m+1},\dotsc,\psi_r$ form
    a basis for $\ker A_-^*$. Choose $w_1,\dotsc,w_r\in Y_+$ biorthogonal to $\psi_1,\dotsc,\psi_r$.
    Then $w_1,\dotsc,w_m\in(\ker A_-^*)^\perp=\ran A_-$ by choice of $m$, so we can find $u_1,\dotsc,u_m\in X_-$ with
    $Au_j=w_j\in Y_+$, which also means $u_1,\dotsc,u_m\in Z$ by definition of $Z$.

    Claim: $u_1,\dotsc,u_m,\ker A_-$ are independent modulo $X_+$.
    If not, we could find nontrivial coefficients $\alpha_1,\dotsc,\alpha_m$ as well as 
    $k\in\ker A_-$, $x\in X_+$, so that 
    $$\sum_{i=1}^m\alpha_iu_i=k+x.$$
    Then
    $$A\sum_{i=1}^m\alpha_iu_i=Ak+Ax=Ax,$$
    so
    $$\alpha_j=\psi_j(\sum_{i=1}^m\alpha_iw_i)=\psi_j(A\sum_{i=1}^m\alpha_iu_i)=\psi_j(Ax)=0\qquad(j=1,\dotsc,n)$$
    since $\psi_j\in\ker A_+^*=(\ran A_+)^\perp$ and $Ax\in\ran A_+$. 
    The coefficients are trivial --- contradiction.

    Assume we can add a $u_{m+1}\in Z$ so that
    $u_1,\dotsc,u_{m+1},\ker A_-$ are still independent modulo $X_+$.
    The system
    $$\psi_j(A\sum_{i=1}^{m+1}\alpha_iu_i)=0\qquad(j=1,\dotsc,m)$$ 
    is underdetermined, so we can find a nontrivial solution $\alpha_1,\dotsc,\alpha_{m+1}$.
    Combined with the same result for $j=m+1,\dotsc,r$ (trivial) and with
    $$\psi(A\sum_{i=1}^{m+1}\alpha_iu_i)=0\qquad\text{for $\psi\in\ker A_-^*\subset\ker A_+^*$},$$
    we obtain
    $$\psi(A\sum_{i=1}^{m+1}\alpha_iu_i)=0\qquad\text{for all $\psi\in\ker A_+^*$},$$
    i.e.
    $$A\sum_{i=1}^{m+1}\alpha_iu_i\in\ker(A_+^*)^\perp=\ran A_+,$$
    but then 
    $$A(\sum_{i=1}^{m+1}\alpha_iu_i-d)=0$$
    for some $d\in X_+$, so $u_1,\dotsc,u_{m+1},\ker A_-$ are dependent
    modulo $X_+$ --- contradiction. 

    Hence $u_1,\dotsc,u_m$ form the \emph{basis} of a complement of $\ker A_-$ in $Z$ modulo $X_+$, 
    so\footnote{We write $V/W:=V/(V\cap W)$ for simplicity.}
    $$\dim(Z/X_+)=\dim(\ker A_-/X_+) + m.$$

    Finally,
    \begin{alignat*}{1}
        &\find A_--\find A_+\\
        &= (\dim\ker A_--\dim\ker A_-^*)-(\dim\ker A_+-\dim\ker A_+^*) \\
        &= (\dim\ker A_--\dim\ker A_+)+(\dim\ker A_+^*-\dim\ker A_-^*) \\
        &= \dim(\ker A_-/\dim\ker A_+) + m = \dim(\ker A_-/X_+) + m \\
        &= \dim(\ker A_-/X_+) + \dim(Z/X_+)-\dim(\ker A_-/X_+) = \dim(Z/X_+).
    \end{alignat*}
\end{proof}

\end{document}